\newcommand{\SEP}{{\diamond}}
\newtheorem{thm}{Theorem}[section]
\newtheorem{theorem}[thm]{Theorem}
\newtheorem{lemma}[thm]{Lemma}
\theoremstyle{definition}
\newtheorem{mydefinition}[thm]{Definition}
\newtheorem*{claim}{Claim}
\theoremstyle{remark}
\newtheorem{myexample}[thm]{Example}
\newcommand{\cqed}{}
\begin{document}
\title{On Two LZ78-style Grammars: Compression Bounds and Compressed-Space Computation
}
\author[1]{Golnaz Badkobeh\thanks{Supported by the Leverhulme Trust on the Leverhulme Early Career Scheme.}}
\author[2]{Travis Gagie}
\author[3]{Shunsuke Inenaga}
\author[4]{Tomasz Kociumaka\thanks{Supported by Polish budget funds for science in 2013--2017 under the `Diamond Grant' program.}}
\author[5]{Dmitry~Kosolobov}
\author[5]{Simon J. Puglisi\thanks{Supported by the Academy of Finland via grant 294143.}}

\date{\vspace{-1cm}}

\affil[1]{Department of Computer Science, University of Warwick, Coventry, England}
\affil[ ]{\href{mailto:g.badkobeh@warwick.ac.uk}{\url{g.badkobeh@warwick.ac.uk}}}
\affil[2]{EIT, Diego Portales University and CeBiB, Santiago, Chile}
\affil[ ]{\href{mailto:travis.gagie@mail.udp.cl}{\url{travis.gagie@mail.udp.cl}}}
\affil[3]{Department of Informatics, Kyushu University, Fukuoka, Japan}
\affil[ ]{\href{mailto:inenaga@inf.kyushu-u.ac.jp}{\url{inenaga@inf.kyushu-u.ac.jp}}}
\affil[4]{Institute of Informatics, University~of~Warsaw,~Warsaw,~Poland}
\affil[ ]{\href{mailto:kociumaka@mimuw.edu.pl}{\url{kociumaka@mimuw.edu.pl}}}
\affil[5]{Department of Computer Science, University of Helsinki, Helsinki, Finland}
\affil[ ]{\href{mailto:dkosolobov@mail.ru}{\url{dkosolobov@mail.ru}} \href{mailto:puglisi@cs.helsinki.fi}{\url{puglisi@cs.helsinki.fi}}}

\maketitle

\begin{abstract}
We investigate two closely related LZ78-based compression schemes: LZMW (an old scheme by Miller and Wegman) and LZD (a recent variant by Goto et al.). Both LZD and LZMW naturally produce a grammar for a string of length $n$; we show that the size of this grammar can be larger than the size of the smallest grammar by a factor $\Omega(n^{\frac{1}3})$ but is always within a factor $O((\frac{n}{\log n})^{\frac{2}{3}})$. In addition, we show that the standard algorithms using $\Theta(z)$ working space to construct the LZD and LZMW parsings, where $z$ is the size of the parsing, work in $\Omega(n^{\frac{5}4})$ time in the worst case. We then describe a new Las Vegas LZD/LZMW parsing algorithm that uses $O (z \log n)$ space and $O(n + z \log^2 n)$ time with high probability.
\vspace{1ex}

\noindent
{\bfseries Keywords:} LZMW, LZD, LZ78, compression, smallest grammar
\end{abstract}

\section{Introduction}

The LZ78 parsing~\cite{ZL78} is a classic dictionary compression technique, discovered by Lempel and Ziv in 1978, that gained wide use during the 1990s in, for example, the Unix {\tt compress} tool and the GIF image format. Not written about until much later was that LZ78 actually produces a representation of the input string as a context-free grammar. In recent years, grammar compressors have garnered immense interest, particularly in the context of compressed text indexing: it is now possible to efficiently execute many operations directly on grammar-compressed strings, without resorting to full decompression (e.g., see \cite{BCPT15,BLRSSW15,CN11,GGKNP12,INIBT13,TIIBT13}).

A wide variety of grammar compressors are now known, many of them analyzed by Charikar et al.~\cite{CharikarEtAl} in their study of the smallest grammar problem, which is to compute the smallest context-free grammar that generates the input string (and only this string). Charikar et al.~show that this problem is NP-hard, and further provide lower bounds on approximation ratios for many grammar compressors. LZ78 is shown to approximate the smallest grammar particularly poorly, and can be larger than the smallest grammar by a factor $\Omega(n^\frac{2}{3}/\log n)$ (in~\cite{HuckeLohreyReh} this bound was improved to $\Omega((\frac{n}{\log n})^{\frac{2}{3}})$), where $n$ is the input length.

Our focus in this paper is on the LZD~\cite{GotoEtAl} and LZMW~\cite{MillerWegman} grammar compression algorithms, two variants of LZ78 that usually outperform LZ78 in practice. Despite their accepted empirical advantage over LZ78, no formal analysis of the compression performance of LZD and LZMW in terms of the size of the smallest grammar exists. This paper addresses that need. Moreover, we show that the standard algorithms for computing LZD and LZMW have undesirable worst case performance, and provide an alternative algorithm that runs in log-linear randomized time. In particular the contributions of this article are as follows:
\begin{enumerate}
\item We show that the size of the grammar produced by LZD and LZMW can be larger than the size of the smallest grammar by a factor $\Omega(n^{\frac{1}{3}})$ but is always within a factor $O((\frac{n}{\log n})^{\frac{2}{3}})$. To our knowledge these are the first non-trivial bounds on compression performance known for these algorithms.
\item Space usage during compression is often a concern. For both LZD and LZMW, parsing algorithms are known that use $O(z)$ space, where $z$ is the size of the final parsing. We describe strings for which these algorithms require $\Omega(n^{\frac{5}{4}})$ time. (The only previous analysis is an $O(n^2/\log n)$ upper bound~\cite{GotoEtAl}.)
\item We describe a Monte-Carlo parsing algorithm for LZD/LZMW that uses a z-fast trie~\cite{BBV10} and an AVL-grammar~\cite{Rytter03} to achieve $O(z \log n)$ space and $O(n + z \log^2 n)$ time for inputs over the integer alphabet $\{0,1,\ldots,n^{O(1)}\}$. This algorithm works in the streaming model and computes the parsing with high probability. Using the Monte-Carlo solution, we obtain a Las Vegas algorithm that, with high probability, works in the same space and time.
\end{enumerate}

In what follows we provide formal definitions and examples of LZD and LZMW parsings. Section~\ref{sec-approx} then establishes bounds for the approximation ratios for the sizes of the LZD/LZMW grammars. In Section~\ref{sec-smallspace} we consider the time efficiency of current space-efficient parsing schemes for LZD/LZMW. Section~\ref{sec-faster} provides an algorithm with significantly better (albeit randomized) performance. Conclusions and reflections are offered in Section~\ref{sec-conclusion}.

\subsection{Preliminaries.}
We consider strings drawn from an alphabet $\Sigma$ of size $\sigma = |\Sigma|$.
The \emph{empty string} is denoted by $\epsilon$.
The $i$th letter of a string $s$ is denoted by $s[i]$ for $i$ such that $1 \le i \le |s|$, and the substring of $s$ that begins at position $i$ and ends at position $j$ is denoted by $s[i..j]$ for $1 \le i \le j \le |s|$.
Let $s[i..j] = \epsilon$ if $j<i$.
For any $i,j$, the set $\{k\in \mathbb{Z} \colon i \le k \le j\}$ (possibly empty) is denoted by $[i..j]$.

For convenience, we assume that the last letter of the input string $s$ is  $ \$$, where $ \$$ is a special delimiter letter that does not occur elsewhere in the string.

\begin{mydefinition}The \emph{LZD (LZ--Double) parsing} \cite{GotoEtAl} of a string $s$ of
length $n$ is the parsing $s = p_1 p_2 \cdots p_z$ such that, for $i \in [1..z]$, $p_i= p_{i_1}p_{i_2}$ where $p_{i_1}$ is the longest prefix of $s[k..n]$ and $p_{i_2}$ is the longest prefix of $s[k+|p_{i_1}|..n]$ with $p_{i_1}, p_{i_2}\in \{p_1,\ldots,p_{i-1}\}\cup \Sigma $ where $k=|p_1\cdots p_{i-1}|+1$. We refer to the set $\Sigma\cup\bigcup_{i \in [1..z]}\{p_i\}$ as the \emph{dictionary of LZD}.
\end{mydefinition}

\begin{mydefinition}
The \emph{LZMW (LZ--Miller--Wegman) parsing} \cite{MillerWegman} of a string $s$ of
length $n$ is the parsing $s = p_1 p_2 \cdots p_z$ such that,
for $i \in [1..z]$, $p_i$ is the longest prefix of $s[k..n]$  with $p_i \in \{p_j p_{j+1}\colon 1 \le j \le i-2\}\cup \Sigma$ where $k=|p_1\cdots p_{i-1}|+1$. We refer to the set $\bigcup_{i \in [2..z]}\{p_{i-1}p_i\}$ as the \emph{dictionary of LZMW}.
\end{mydefinition}

\begin{myexample}
The LZD parsing of the string $s = abbaababaaba\$$ is $p_1=ab$, $p_2=ba$, $p_3=abab$, $p_4=aab$, and $p_5=a\$$. This can be represented by $(a,b), (b,a), (1, 1),(a, 1), (a,\$)$.
The LZMW parsing of $s$ is the following: $p_1=a$, $p_2=b$, $p_3=b$, $p_4=a$, $p_5=ab$, $p_6=ab$, $p_7=aab$, $p_8= a$, and $p_{9}=\$$. This can be represented by $(a,b,b,a,1,1,4,a,\$)$.
\end{myexample}

Notice that the LZD/LZMW parsing of string $s$ can be seen as a grammar that only generates $s$, with production rules of form $p_i \rightarrow p_jp_k~(j < i, k < i)$ or $p_i \rightarrow a~(\in \Sigma)$ for each phrase $p_i$, and the start rule $S \rightarrow p_1 p_2 \cdots p_z$. The \emph{size} of a grammar is the total number of symbols in the right-hand side of the production rules. Thus, the size of the LZD (resp., LZMW) grammar is only by a constant factor larger than the number of phrases in the LZD (resp., LZMW) parsing.

\section{Approximating the Smallest Grammar}
\label{sec-approx}
The following theorem shows that, although LZD and LZMW have good compression performance in practice on high-entropy strings, their performance on low-entropy strings can be very poor.

\begin{theorem}
For arbitrarily large $n$, there are strings $s$ of length $n$ for which the size of the grammars produced by the LZD and LZMW parsings is larger than the size of the smallest grammar generating $s$ by a factor $\Omega(n^{\frac{1}3})$.
\end{theorem}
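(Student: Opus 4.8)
The plan is to exhibit an explicit family $(s_k)_k$ of strings with $n := |s_k| = \Theta(k^3)$ for which (i) the smallest grammar has size $O(k)$, while (ii) both the LZD and the LZMW parsing use $\Omega(k^2)$ phrases. Since the LZD (resp.\ LZMW) grammar has size within a constant factor of its number of phrases, (i) and (ii) give a ratio $\Omega(k^2)/O(k) = \Omega(k) = \Omega(n^{1/3})$. The cube root is the natural target: the construction must balance three quantities against one another --- the number of blocks making up $s_k$ (which governs the grammar size, $\Theta(k)$), the number of phrases forced inside a single block ($\Theta(k)$), and the length of such a forced phrase ($\Theta(k)$, hence a block length $\Theta(k^2)$ and $n = \Theta(k^3)$). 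Morally this is the analogue for LZD/LZMW of the LZ78 lower-bound constructions of Charikar et al.~\cite{CharikarEtAl} and Hucke, Lohrey and Reh~\cite{HuckeLohreyReh}; the reason the separation drops from $\Omega(n^{2/3}/\log n)$ for LZ78 to $\Omega(n^{1/3})$ here is precisely that an LZD/LZMW phrase may be a concatenation of two earlier phrases and can therefore double in length on runs and on powers, so $s_k$ must be designed to deny it that.

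For the grammar upper bound I would let all $\Theta(k)$ blocks share essentially the same structure: a single ``ladder'' of $O(k)$ nonterminals (expanding, say, to successively longer fixed words), plus $O(1)$ block-specific rules per block, plus a start rule listing the blocks. Crucially, every repetition needed inside a block should be realized by reusing the ladder nonterminals rather than by a fresh ``repeat $m$ times'' gadget, so that no $\Theta(\log k)$ overhead appears and the total stays $O(k)$. This step is routine once $s_k$ is pinned down.

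The core of the argument, and the main obstacle, is the parsing lower bound. It suffices to prove that every phrase of the LZD parsing of $s_k$ has length $O(k)$ (at most $O(1)$ exceptions per block are harmless): since $|s_k| = \Theta(k^3)$, this forces $\Omega(k^2)$ phrases. For an LZD phrase $p_i = p_{i_1} p_{i_2}$ with $p_{i_1}, p_{i_2} \in \{p_1, \dots, p_{i-1}\} \cup \Sigma$, proving this amounts to excluding long matches \emph{no matter what the dictionary has accumulated from earlier blocks}. Hence $s_k$ must be ``anti-bootstrapping'': self-similar enough at the block scale to be grammar-compressible, yet non-repetitive at every scale below $\Theta(k)$, so that no substring of length $\omega(k)$ beginning at the current parsing position ever equals an earlier phrase; and, because LZD first chooses the longest $p_{i_1}$ and then the longest $p_{i_2}$, one must additionally rule out that a length-$O(k)$ first part can be followed by a second part spelling out, together with it, a length-$\omega(k)$ stretch. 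Concretely this forces the construction to contain no run of a single symbol longer than $O(k)$ --- a single long run would already let LZD double its phrase length logarithmically --- and, more generally, no long square or high power. Establishing this combinatorial property of $s_k$ is where essentially all the work lies.

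The LZMW case should follow by the same template. Its dictionary --- the products $p_j p_{j+1}$ of \emph{consecutive} phrase pairs --- is strictly more restrictive than LZD's set of all previous phrases, so the long-match exclusions are, if anything, easier to verify; but since the two algorithms place phrase boundaries differently, one re-runs the (short) case analysis rather than quoting it. Combining (i) and (ii) then yields the claimed $\Omega(n^{1/3})$ separation for both parsings, with the grammar bound and the final arithmetic being short and the length of the proof dominated by the design and analysis of $s_k$.
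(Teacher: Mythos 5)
Your high-level frame (a string of length $n=\Theta(k^3)$ built from $\Theta(k)$ blocks of length $\Theta(k^2)$, each forced to split into $\Theta(k)$ phrases of length $\Theta(k)$, against a grammar of size $O(k)$) coincides with the paper's plan, but the proposal omits exactly the step that constitutes the proof --- an explicit $s_k$ together with an argument that the phrases stay short --- and the route you sketch for that step cannot work. You want to exclude long matches ``no matter what the dictionary has accumulated'' by making $s_k$ non-repetitive below scale $\Theta(k)$ (no long runs, no long squares or high powers). But any string of length $\Theta(k^3)$ whose smallest grammar has size $O(k)$ is necessarily massively repetitive at scale $\Theta(k)$: by Lemma~\ref{lem:mkLemma} it has at most $O(k\ell)$ distinct substrings of length $\ell$, so for $\ell=\Theta(k)$ there are $O(k^2)$ distinct substrings against $\Theta(k^3)$ occurrences, i.e.\ a typical length-$\Theta(k)$ substring recurs $\Omega(k)$ times. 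So a dictionary-oblivious combinatorial exclusion of long matches is unavailable in principle. Worse, since your blocks ``share essentially the same structure,'' a phrase of length $\Theta(k)$ learned in one copy of a block is available when the next copy is parsed; LZD (and LZMW, via adjacent pairs) can then combine two neighbouring such phrases into a dictionary string of length $\approx 2k$, match it in the following copy, and keep doubling, so that after $O(\log k)$ repetitions a whole block is consumed by $O(\log k)$ phrases --- destroying the $\Omega(k^2)$ bound. Power-freeness \emph{inside} a block does nothing against this cross-block bootstrapping, which is precisely the difficulty you defer to ``where essentially all the work lies.''

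The paper resolves this in the opposite way to what you propose: the hard part of its string is itself a high power $x^{k/2}$, where $x$ is a concatenation of words $\delta_i=a^ibba^{k-i}$ that a short prefix has planted in the dictionary, and the construction is rigged (the dictionary string $a^k$ versus the suffix $a^{k-1}$ of $x$) so that the parsing of each successive copy of $x$ begins one position later than that of the previous copy. This deliberate desynchronization ensures that the longer strings the dictionary accumulates are never prefixes of the suffix remaining to be parsed, so every copy of $x$ is still cut into $\Theta(k)$ phrases of length $\Theta(k)$; the proof tracks the dictionary contents explicitly by induction rather than appealing to any non-repetitiveness of the text. Until you exhibit a concrete $s_k$ with such a dictionary-tracking argument (or some substitute mechanism playing the role of this shift), your step (ii) is unestablished, so the proposal has a genuine gap at its core.
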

\begin{proof}
Our proof is inspired by~\cite[Section VI, C]{CharikarEtAl}. Let $k \ge 4$ be an integer that is a power of~$2$. We will construct a string $s$ of length $n = \Theta(k^3)$ that can be encoded by a grammar of size $O(k) = O(n^{\frac{1}3})$, but for which the LZMW parsing produces a grammar of size $\Omega(k^2) = \Omega(n^{\frac{2}3})$. The input alphabet is $\{a,b,c,d\}$; the letters $c$ and $d$ serve as separators. Denote $\delta_i = a^i bb a^{k-i}$ and $\gamma_i = ba^i\,a\,a^ib\,c\,ba\,ba^2\,ba^3\cdots ba^i$. The string $s$ is as follows:
$$
\begin{array}{l}
x = \delta_k \delta_{k-1}\, \delta_k \delta_{k-2}\, \delta_k \delta_{k-3} \cdots \delta_k \delta_{k/2+1}\, \delta_k a^{k-1},\\
s = \gamma_0\gamma_1\cdots\gamma_{k-1}\delta_0 d\delta_1 d\cdots \delta_k d\,c aa\,c aa^2a^2 \cdots ca^{2^i-1}a^{2^i}a^{2^i}\cdots ca^{\frac{k}2-1}a^{\frac{k}2}a^{\frac{k}2} dc\, x^{\frac{k}2}\enspace.
\end{array}
$$

We have $|s| = \Theta(k^3)$. Consider the prefix $\gamma_0\gamma_1 \cdots \gamma_{k-1}$ $\delta_0 d\delta_1 d \cdots d\delta_k d$, which will ensure the strings $\delta_i$ are in the LZMW dictionary.

We will show by induction on $i$ that each substring $\gamma_i$ of the prefix $\gamma_0\gamma_1\cdots \gamma_{k-1}$ is composed of the phrases $ba^i$, $a$, $a^ib$, $cbaba^2\cdots ba^i$ in the parsing of the string $s$. It is trivial for $i = 0$. Suppose that $i > 0$ and the assertion holds for all $\gamma_{i'}$ and $i' < i$. It follows from the inductive hypothesis that $ba^i$ is the longest prefix of $\gamma_i$ that is equal to a concatenation of two adjacent phrases introduced before the starting position of $\gamma_i$. Hence, by the definition of LZMW, the string $\gamma_i$ starts with the phrase $ba^i$. In the same way we deduce that the phrase $ba^i$ is followed by the phrases $a$, $a^ib$, and $cbaba^2\cdots ba^i$.

By a similar inductive argument, one can show that each substring $\delta_i d$ of the substring $\delta_0 d\delta_1 d\cdots \delta_k dc$ is composed of the phrases $a^ib$, $ba^{k-i}$, $d$. Since the phrases $a^ib$ and $ba^{k-i}$ are adjacent, the LZMW dictionary now contains the strings $\delta_i = a^ibba^{k-i}$ for all $i = 0,1,\ldots, k$.

Similarly, the substring $c aa caa^2a^2 \cdots ca^{2^i-1}a^{2^i}a^{2^i} \cdots ca^{\frac{k}2-1}a^{\frac{k}2}a^{\frac{k}2} dc$ is parsed as $c, a, a, ca, a^2, a^2, \ldots,$ $ca^{2^i-1}, a^{2^i}, a^{2^i}, \ldots, ca^{\frac{k}2-1}, a^{\frac{k}2}, a^{\frac{k}2}, dc$. In what follows we need only the string $a^k$ introduced to the dictionary by the pair of phrases $a^{\frac{k}2}$.

Finally, consider the substring $x^{\frac{k}2}$. Observe that the first occurrence of $x$ is parsed in (almost) the way it is written, i.e., it is parsed as $\delta_k, \delta_{k-1}, \delta_k, \delta_{k-2}, \ldots, \delta_k, \delta_{k/2+1}, \delta_k$. But the last phrase is $a^k$ instead of $a^{k-1}$. In other words, the parsing of the second occurrence of $x$ starts from the second position of $x$ and, therefore, the first phrases of this parsing are as follows:
$$
\delta_{k-1}, \delta_{k-2}, \delta_{k-1}, \delta_{k-3}, \ldots, \delta_{k-1}, \delta_{k/2}, \delta_{k-1}.
$$
Again, the last phrase is $a^k$ and, hence, the parsing of the third occurrence of $x$ starts with the third position of $x$, and so on.

The LZMW parsing of $s$, therefore, consists of $\Omega(k^2)$ phrases and the size of the LZMW grammar is $\Omega(k^2)$. But there is a grammar of size $O(k)$ producing $s$:
$$
\begin{array}{l}
S \rightarrow \Gamma_0\Gamma_1\cdots\Gamma_{k-1}\Delta_0d\Delta_1d\cdots \Delta_kd cA_{2}cA_{5}cA_{11}\cdots cA_{k/2+k-1}dcX^{k/2},\\
A_0 \rightarrow \epsilon, \quad B_0 \rightarrow c,\quad A_i \rightarrow A_{i-1}a, \quad B_i \rightarrow B_{i-1}bA_i\quad\text{ for }i \in [1..2k],\\
\Gamma_i \rightarrow bA_{2i+1}bB_i, \quad \Delta_i \rightarrow A_i bb A_{k-i}\quad \text{ for }i \in [0..k],\\
X \rightarrow \Delta_k\Delta_{k-1}\,\Delta_k\Delta_{k-2}\cdots \Delta_k\Delta_{k/2+1}\,\Delta_k A_{k-1}\enspace.
\end{array}
$$

Using similar ideas we can describe a troublesome string for the LZD scheme:
$$
s = (a^2\,c^2\,a^3\,c^3 \cdots a^{k}c^{k})(bb\,abb\,a^2bb\,a^3 \cdots bba^{k-1}bb)(\delta_0d^2\delta_1d^3 \cdots \delta_kd^{k+2})x^{\frac{k}2}\enspace.
$$

As above, the size of the grammar corresponding to the LZD parsing of $s$ is $\Omega(k^2)$ whereas the size of the smallest grammar is $O(k)$; hence, the result follows.
$$
\begin{array}{l}
S \rightarrow A_2C_2A_3C_3\cdots A_kC_kbbA_1bbA_2\cdots bbA_{k-1}bb\Delta_0D_2\Delta_1D_3\cdots \Delta_kD_{k+2}X^{k/2},\\
A_0 \rightarrow \epsilon, C_0 \rightarrow \epsilon, D_0 \rightarrow \epsilon, A_i \rightarrow A_{i-1}a, C_i \rightarrow C_{i-1}c, D_i \rightarrow D_{i-1}d\text{ for }i \in [1..k{+}2],\\
\Delta_i \rightarrow A_i bb A_{k-i} \text{ for }i \in [0..k],\quad
X \rightarrow \Delta_k\Delta_{k-1}\,\Delta_k\Delta_{k-2}\cdots \Delta_k\Delta_{k/2+1}\,\Delta_k A_{k-1}\enspace.
\end{array}
$$
The analysis is similar to the above but simpler, so, we omit it. To additionally verify the correctness of both constructions, we conducted experiments on small $k$ and, indeed, observed the described behavior; the code can be found in~\cite{LZDandLZMWcode}.
\cqed
\end{proof}

We can also show that the upper bound for the approximation ratio of the LZ78 parsing given in~\cite{CharikarEtAl} also applies to the LZD and LZMW parsings. For this, we will use the following known results.

\begin{lemma}[\cite{CharikarEtAl}] \label{lem:mkLemma}
If there is a grammar of size $m$ generating a given string,
then this string contains at most $mk$ distinct substrings of length $k$.
\end{lemma}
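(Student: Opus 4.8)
The plan is to argue via the derivation tree and to encode each distinct length-$k$ substring of $s$ by the place where it ``crosses a boundary'' inside the right-hand side of some rule. Assume $G$ is a grammar of size $m$ generating $s$ in the usual normalized form: acyclic, one rule per nonterminal, every nonterminal reachable from the start symbol $S$, and write $\exp(A)$ for the unique string derived from a symbol $A$ (a single letter if $A\in\Sigma$), so that $\exp(S)=s$ and $m=\sum_A|\mathrm{rhs}(A)|$. The case $k=1$ is immediate: each length-$1$ substring of $s$ is a letter occurring in $s$, hence occurring on the right-hand side of some rule, so there are at most $m=mk$ of them. So from now on fix $k\ge 2$.

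For an occurrence $s[i..i+k-1]$ of a length-$k$ substring $w$, I would look at the set of nodes of the derivation tree whose yield entirely contains this window; these form a path from the root, and I take the \emph{deepest} such node $v$, labeled by a nonterminal $A$ with rule $A\to X_1\cdots X_t$. Since $v$ is deepest and $k\ge 2$, the window is not contained in the yield of any single $X_j$, so it crosses at least one of the $t-1$ internal boundaries of this rule; let the \emph{leftmost} crossed boundary lie between $X_j$ and $X_{j+1}$. Leftmost-ness forces the window to start at or after the start of $\exp(X_j)$, and since the window has length $k$ but still reaches past that boundary, at most $k-1$ of its letters lie in $\exp(X_j)$; call that number $\ell\in[1..k-1]$. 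The crucial observation is that $w$ is then a function of the triple $(A,j,\ell)$ alone: $w$ is the length-$\ell$ suffix of $\exp(X_j)$ followed by the length-$(k-\ell)$ prefix of $\exp(X_{j+1}X_{j+2}\cdots X_t)$, all of which are determined by $G$ independently of which occurrence of $A$ (which node $v$) produced it. Hence distinct substrings $w$ map to distinct triples, and the number of available triples is $\sum_A (|\mathrm{rhs}(A)|-1)(k-1)\le (m-1)(k-1)<mk$, which proves the bound.

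I expect the only genuine subtlety — the ``hard part'' of an otherwise short argument — to be setting up this encoding so that it is well-defined and, above all, occurrence-independent: it is precisely the combination of choosing the \emph{deepest} containing node (which rules out the window sitting inside a child's yield, and rules out unit productions and terminal children at $v$) together with the \emph{leftmost} crossed boundary (which guarantees the window starts inside $\exp(X_j)$, hence $\ell\le k-1$) that makes the reconstructed string depend only on $(A,j,\ell)$ and not on the particular node or position in $s$. Once that is in place, the counting is routine, and no separate case analysis for degenerate rules is needed since those cases simply cannot be the deepest node.
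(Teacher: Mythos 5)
Your proof is correct: the paper itself states this lemma without proof (citing Charikar et al.), and your argument---charging each distinct length-$k$ substring, via an occurrence's deepest containing node in the derivation tree, to a boundary inside a right-hand side plus an offset $\ell\le k-1$, so that the substring is reconstructible from $(A,j,\ell)$ alone---is essentially the standard argument from that cited source. The handling of the $k=1$ case, unit/terminal rules, and the final count $\sum_A(|\mathrm{rhs}(A)|-1)(k-1)<mk$ are all sound, so nothing further is needed.
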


\begin{lemma}[\cite{GotoEtAl}] \label{lem:LZD-distinct}
All phrases in the LZD parsing of a given string are distinct.
\end{lemma}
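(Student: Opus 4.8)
The plan is to argue by contradiction, mirroring the classical proof that LZ78 phrases are distinct: if two phrases coincided as strings, the greedy longest-match rule defining one of them would have been violated. Note that no induction is actually needed -- a single clash between any two phrases suffices to produce the contradiction.

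Concretely, suppose $p_i = p_j$ for some $i < j$, and let $k = |p_1 \cdots p_{j-1}| + 1$ be the starting position of the $j$-th phrase, so that $p_j = s[k..k+|p_j|-1]$; since $p_i$ and $p_j$ are equal as strings, $p_i$ is a prefix of $s[k..n]$. Because $i \le j-1$, the string $p_i$ lies in the dictionary $\{p_1,\ldots,p_{j-1}\}\cup\Sigma$ from which the first component $p_{j_1}$ of $p_j = p_{j_1}p_{j_2}$ is chosen, and since $p_{j_1}$ is by definition the \emph{longest} prefix of $s[k..n]$ in that dictionary, we get $|p_{j_1}| \ge |p_i| = |p_j|$. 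On the other hand $p_{j_1}$ is a prefix of $p_j$, and $p_{j_2}$ is itself an alphabet symbol or an earlier phrase, hence nonempty, so $|p_{j_1}| = |p_j| - |p_{j_2}| < |p_j|$. Combining the two bounds gives $|p_j| \le |p_{j_1}| < |p_j|$, a contradiction; hence all phrases are pairwise distinct.

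The only step that needs care is the assertion that $p_{j_2}$ is nonempty, i.e. that $p_{j_1}$ is a \emph{proper} prefix of $p_j$ rather than all of it. This is where the convention that $s$ terminates with the unique sentinel $\$$ is used: no phrase occurring before $p_j$ can contain the symbol $\$$ (it would have to cover position $n$), so after matching $p_{j_1}$ at least one character of $s[k..n]$ remains unread, and since every single character lies in $\Sigma$, the greedy choice of $p_{j_2}$ is a genuine nonempty dictionary element. I expect this boundary bookkeeping at the end of the string to be the only mildly delicate point; the heart of the argument is the one chain of inequalities $|p_j| \le |p_{j_1}| < |p_j|$.
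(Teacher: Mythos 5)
Your proof is correct. Note that the paper itself gives no argument for this lemma: it is quoted from Goto et al.~\cite{GotoEtAl}, so there is nothing internal to compare against; your write-up supplies a valid self-contained proof in the classical LZ78 style (greedy longest-match versus a repeated phrase), via the chain $|p_j|\le|p_{j_1}|<|p_j|$. The one place I would tighten the wording is the nonemptiness of $p_{j_2}$: your justification speaks of phrases occurring before $p_j$, but $p_{j_1}$ need not be an earlier phrase --- it could a priori be the single letter $\$$ (when $p_j$ starts at position $n$), in which case no character remains after it. The clean way to dispose of this is to use the standing assumption $p_i=p_j$ directly: $p_i$ lies inside $s[1..k{-}1]$, which contains no $\$$, so the string $p_j=p_i$ contains no $\$$ and hence ends strictly before position $n$; since $p_{j_1}$ is a prefix of $p_j$, at least the final $\$$ of $s$ remains unread after matching $p_{j_1}$, and the greedy choice of $p_{j_2}$ (every single letter being in the dictionary) is nonempty. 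Alternatively, observe that $i<j\le z$ forces $|p_i|\ge 2$, so $|p_{j_1}|\ge 2$ and $p_{j_1}$ really is an earlier phrase, to which your sentinel argument applies verbatim. Either patch is one line; the heart of your argument is sound.
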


\begin{lemma}\label{lem:LZMW-distinct}
Let $p_1p_2\cdots p_z$ be the LZMW parsing of a given string. Then, for any $i \in [2..z]$ and $j \in [i{+}2 .. z]$, we have $p_{i-1}p_i \ne p_{j-1}p_j$.
\end{lemma}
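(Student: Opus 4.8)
The plan is to argue by contradiction, using nothing more than the greedy longest-match rule in the definition of LZMW together with careful bookkeeping of when a pair enters the dictionary. Assume for contradiction that $p_{i-1}p_i = p_{j-1}p_j =: w$ for some $i \in [2..z]$ and some $j \in [i{+}2..z]$. Note that $p_{i-1}$ exists because $i \ge 2$, that $p_j$ exists because $j \le z$, and that $j \ge i+2 \ge 4$, so $p_{j-1}$ is an ordinary (non-trivial) phrase chosen by the parsing rule.

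First I would pin down the one index inequality that matters. By the definition of LZMW, the phrase $p_{j-1}$ is the longest prefix of $s[k..n]$ lying in $\{p_\ell p_{\ell+1} : 1 \le \ell \le (j-1)-2\} \cup \Sigma$, where $k = |p_1\cdots p_{j-2}| + 1$ is the starting position of $p_{j-1}$. Since $j \ge i+2$, we have $1 \le i-1 \le j-3 = (j-1)-2$, which says exactly that the pair $p_{i-1}p_i = w$ is among the candidate strings available at the moment $p_{j-1}$ is selected. This is the only place the hypothesis $j \ge i+2$ is used; for $j = i+1$ the pair $p_{i-1}p_i$ has not yet entered the dictionary, which is why the statement must exclude that case.

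Second, I would read off the contradiction. The suffix $s[k..n]$ begins with $p_{j-1}$ followed by $p_j$, hence it begins with $p_{j-1}p_j = w$. Thus $w$ is a prefix of $s[k..n]$ that belongs to the dictionary used when $p_{j-1}$ is chosen, so the longest-match rule forces $|p_{j-1}| \ge |w| = |p_{j-1}| + |p_j|$, i.e.\ $|p_j| \le 0$, contradicting the fact that every LZMW phrase is nonempty. This finishes the argument.

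I do not expect a genuine obstacle here: the proof is short, and the only thing to get right is the off-by-one accounting above (that the pair indexed by $i-1$ has provably entered the dictionary strictly before phrase $p_{j-1}$ is produced, and that $p_{j-1}$ and $p_j$ are both legitimate phrases). This lemma plays, for LZMW, the role that Lemma~\ref{lem:LZD-distinct} plays for LZD: it is slightly weaker than ``all dictionary pairs $p_{i-1}p_i$ are distinct'', but restricting to indices $i$ and $j$ with $j \ge i+2$ (equivalently, looking only within a fixed parity class of indices) still yields $\Omega(z)$ distinct substrings, which is all that is needed to combine with Lemma~\ref{lem:mkLemma} for the $O((n/\log n)^{2/3})$ upper bound.
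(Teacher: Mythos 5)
Your argument is correct and is essentially the paper's own proof, just spelled out in more detail: since $j\ge i+2$, the pair $p_{i-1}p_i$ is already in the dictionary when $p_{j-1}$ is chosen, and since it is a prefix of the remaining suffix, the greedy rule would force $|p_{j-1}|\ge|p_{i-1}p_i|$, contradicting $p_{j-1}p_j=p_{i-1}p_i$ with $p_j$ nonempty. No gap here.
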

\begin{proof}
If $p_{i-1}p_i = p_{j-1}p_j$ for $i < j - 1$, then, by the definition of LZMW, the phrase $p_{j-1}$ either is equal to $p_{i-1}p_i$ or contains $p_{i-1}p_i$ as a prefix, which is a contradiction.
\cqed
\end{proof}

Now we are ready to show an upper bound on the approximation ratio of the LZD and LZMW parsings.
\begin{theorem} \label{theo:upperbound-LZD}
For all strings $s$ of length $n$, the size of the grammar produced by the LZD/LZMW parsing is larger than the size of the smallest grammar generating $s$ by at most a factor $O((n/\log n)^{2/3})$.
\end{theorem}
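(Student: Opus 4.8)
The plan is to mimic the classical LZ78 upper-bound argument of Charikar et al., adapting it to the fact that LZD phrases are concatenations of two earlier phrases (rather than an earlier phrase plus a single letter) and that LZMW phrases are concatenations of two adjacent earlier phrases. Let $g$ denote the size of the smallest grammar for $s$, and let $z$ be the number of phrases in the LZD (resp.\ LZMW) parsing, so that the grammar size is $\Theta(z)$. We want to show $z = O(g \cdot (n/\log n)^{2/3})$, equivalently $z/g = O((n/\log n)^{2/3})$. The strategy is to bound $z$ in terms of $n$ and the number of \emph{distinct} substrings of each length, then invoke Lemma~\ref{lem:mkLemma} to convert ``number of distinct substrings of length $k$'' into the quantity $gk$.

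First I would set up the counting. By Lemma~\ref{lem:LZD-distinct} all LZD phrases are distinct (and by Lemma~\ref{lem:LZMW-distinct} the consecutive pairs $p_{i-1}p_i$ are almost all distinct, which is the analogue we need for LZMW: the dictionary elements are essentially distinct substrings). So the phrases themselves are distinct substrings of $s$. Partition the phrases by length: for a threshold $\ell$ to be chosen, there are few phrases of length $> \ell$ because each consumes many characters, and few short phrases because short substrings are not numerous. Concretely, the number of phrases of length exceeding $\ell$ is at most $n/\ell$. For the phrases of length at most $\ell$: each such phrase is a distinct substring of $s$, and for each length $k \le \ell$ there are at most (number of distinct length-$k$ substrings) $\le gk$ of them by Lemma~\ref{lem:mkLemma}, giving at most $\sum_{k=1}^{\ell} gk = O(g\ell^2)$ short phrases. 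Hence $z = O(n/\ell + g\ell^2)$.

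Next, optimize over $\ell$. Balancing $n/\ell$ against $g\ell^2$ gives $\ell = \Theta((n/g)^{1/3})$ and $z = O(g^{1/3} n^{2/3})$. Since trivially $g = \Omega(\log n)$ for any string of length $n$ (a grammar of size $g$ generates a string of length at most $O(\alpha^g)$ for some constant, hence $n = g^{O(g)}$ roughly, so $g = \Omega(\log n / \log\log n)$ — one should state the exact bound $g = \Omega(\log n)$ or whatever is cleanest), we get $z/g = O(g^{-2/3} n^{2/3}) = O((n/\log n)^{2/3})$ after substituting the lower bound on $g$ in the denominator. A little care is needed: we want $z/g = O((n/\log n)^{2/3})$, and $z/g = O(n^{2/3}/g^{2/3})$, so it suffices that $g = \Omega(\log n)$; I would cite or quickly prove that a string generated by a grammar of size $g$ has length $2^{O(g)}$, hence $g = \Omega(\log n)$.

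The main obstacle — and the only real subtlety beyond the LZ78 case — is that LZD and LZMW phrases are built from \emph{two} previous tokens, so a phrase of length $k$ need not be ``new'' in the strong sense that an LZ78 phrase is (an LZ78 phrase minus its last letter is a previous phrase). Here we need the distinctness lemmas (Lemmas~\ref{lem:LZD-distinct} and \ref{lem:LZMW-distinct}) to guarantee that the multiset of phrases (resp.\ adjacent-pair dictionary entries) injects into the set of distinct substrings, so that Lemma~\ref{lem:mkLemma} applies. For LZMW I would phrase the counting in terms of the pairs $p_{i-1}p_i$: there are $z-1$ such pairs, at most $O(1)$ coincidences by Lemma~\ref{lem:LZMW-distinct}, each pair is a substring of $s$, a pair of ``length'' $k$ (i.e.\ $|p_{i-1}p_i| = k$) covers $\geq k/2$ fresh text on average or, more carefully, $|p_i| \ge k/2$ or $|p_{i-1}|\ge k/2$ so one of the two phrases is long — which lets the same $n/\ell + g\ell^2$ bound go through with adjusted constants. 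Handling this LZMW bookkeeping cleanly, and making sure the short/long split is charged without double-counting, is where the care goes; the optimization and the $g = \Omega(\log n)$ step are routine.
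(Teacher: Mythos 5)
Your proposal is correct and follows essentially the same route as the paper: both arguments combine Lemma~\ref{lem:mkLemma} with the distinctness properties (Lemmas~\ref{lem:LZD-distinct} and~\ref{lem:LZMW-distinct}) and a length-based counting of dictionary strings to get $z = O(n^{2/3}(m^*)^{1/3})$ and then use $m^* = \Omega(\log n)$, the only difference being cosmetic bookkeeping — you split at a single length threshold $\ell$ and balance $n/\ell$ against $m^*\ell^2$, while the paper sorts the dictionary strings by length into groups of sizes $m^*, 2m^*, \ldots$ (resp.\ $2m^*, 4m^*, \ldots$ for LZMW). One phrasing correction: Lemma~\ref{lem:LZMW-distinct} does not give ``$O(1)$ coincidences'' overall, only that each string occurs at most twice in the multiset of pairs $p_{i-1}p_i$, which is exactly what your count needs and costs just a factor of $2$, as in the paper.
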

\begin{proof}
The theorem can be shown by an analogous way as for the upper bound of the LZ78 parsing against the smallest grammar~\cite{CharikarEtAl} (which is especially straightforward for LZD due to Lemma~\ref{lem:LZD-distinct}), but we provide a full proof for completeness.

Let us consider LZMW. Suppose that $s$ is a string of length $n$ and $m^*$ is the size of the smallest grammar generating $s$. Let $p_1, p_2, \ldots, p_z$ be the LZMW parsing of $s$. It suffices to evaluate the number $z$ of phrases since the total size of the grammar produced by LZMW is only by a constant factor larger than $z$.

Consider the multiset $S = \{p_1p_2, p_2p_3, \ldots, p_{z-1}p_z\}$ (recall that a multiset can contain an element more than one time). Let $p_{i_1}p_{i_1+1}, p_{i_2}p_{i_2+1}, \ldots, p_{i_{z-1}}p_{i_{z-1}+1}$ be a sequence of all strings from $S$ sorted in increasing order of their lengths (again, some strings may occur more than once in the sequence). We partition the sequence by grouping the first $2\cdot m^*$ strings, then the next $2\cdot 2m^*$ strings, the next $2\cdot 3m^*$ strings, and so forth. Let $r$ be the minimal integer satisfying $2(1m^* + 2m^* + \cdots + rm^* + (r+1)m^*) > z$. This implies that $z = O(r^2m^*)$.

By Lemma~\ref{lem:LZMW-distinct}, any string has at most two occurrences in the multiset $S$. Also, it follows from Lemma~\ref{lem:mkLemma} that $s$ contains at most $km^*$ distinct substrings of length $k$. Thus, for any $k \ge 1$, there are at most $2km^*$ strings from $S$ that generate substrings of length $k$. This implies that each string in the $k$th group generates a substring of length at least $k$. Hence, we have that
\[
  2n \ge |p_{i_1}p_{i_1+1}| + |p_{i_2}p_{i_2+1}| + \cdots + |p_{i_{z-1}}p_{i_{z-1}+1}| \ge 2(1^2m^* + 2^2m^* + \cdots + r^2m^*),
\]
which implies that $r = O((n/m^*)^{1/3})$.
By plugging this into $z = O(r^2m^*)$, we obtain $z = O((n/m^*)^{2/3}m^*)$ and thus the approximation ratio of the grammar produced by LZMW is $O((n / m^*)^{2/3})$. Since $m^* = \Omega(\log n)$, we finally get the desired bound $O((n/\log n)^{2/3})$.

Let us sketch the analysis of LZD, which is very similar. In this case, we consider the set $S'$ of all phrases $p_1, p_2, \ldots, p_z$ (not pairs as in LZMW) of the LZD parsing. Let $p_{i_1}, \ldots, p_{i_z}$ be the sequence of all strings from $S'$ sorted by the increasing order of lengths. We partition the sequence into groups of size $1m^*, 2m^*, 3m^*, \ldots$ (without the factor~$2$ as in LZMW). It follows from Lemma~\ref{lem:LZD-distinct} that any string occurs in $S'$ at most once. Therefore, similar to the case of LZMW, we obtain $n = |p_{i_1}| + |p_{i_2}| + \cdots + |p_{i_z}| \ge 1^2m^* + 2^2m^* + \cdots + r^2m^*$, which implies the result in the same way as above.
\cqed
\end{proof}

\section{Small-Space Computation}
\label{sec-smallspace}

In this section we analyze the time required to compute the LZD and LZMW parsings
using the $O(z)$-space algorithms described by Goto et al.~\cite{GotoEtAl} and
Miller and Wegman~\cite{MillerWegman}, where $z$ is the number of phrases.
We focus on LZD throughout, but a very similar algorithm and analysis applies for LZMW.
Goto et al. upperbound the runtime at $O(z(m + \min(z,m)\log\sigma))$, where $m$
is the length of the longest LZD (or LZMW) phrase and $\sigma$ is the size of the input alphabet.
Because $m = O(n)$ and $z = O(n)$,
the runtime is upper bounded by $O(n^2)$. Below we provide a lower bound of $\Omega(n^{5/4})$
on the worst-case runtime, but before doing so we provide the reader with a description of Goto~et~al.'s
algorithm~\cite{GotoEtAl}.\footnote{We concern ourselves here with LZD parsing, but it should be easy for the reader to
see that the algorithms are trivially adapted to instead compute LZMW.}

\paragraph{Na\"{\i}ve parsing algorithms.}

In the compacted trie for a set of strings, each edge label $\ell$ is represented
as a pair of positions delimiting an occurrence of $\ell$ in the set.
In this way we can store the trie for \(s_1, \ldots, s_k\) in $O(k)$ space.
During parsing Goto et al.~\cite{GotoEtAl} maintain the dictionary of LZD phrases in a compacted
trie.
The trie is of size $O(z)$, but read-only random access
to the input string is also required in order to determine the actual values of the
strings on the edge labels.

Initially the trie is empty, consisting of only the root. At a generic step during parsing,
when we go to compute the phrase $p_i = p_{i_1}p_{i_2}$ starting at position
$j = |p_1p_2\ldots p_{i-1}| + 1$, the trie contains nodes representing the phrases
$p_1, p_2, \ldots, p_{i-1}$ and all the distinct symbols occurring in $s[1..j-1]$, and all these
nodes (corresponding to phrases and symbols) are marked. Note that there may also be some nodes in the trie that
do not correspond to any phrase, i.e., branching nodes. Let $s[j..k]$ be the longest prefix of $s[j..n]$
that can be found by traversing the trie from the root. If $s[j..k]$ cannot be matched even for $k = j$,
then $s[j]$ is the leftmost occurrence of symbol $c = s[j]$ in $s$, and we add a child node of the
root labelled with $c$, mark the node, and set it as the first element of the new phrase, i.e., $p_{i_1} = c$.
Otherwise, the first element of $p_i$, $p_{i_1}$, is the string written on the path connecting the root and
the lowest marked node on the path that spells $s[j..k]$.
The second element, $p_{i_2}$, of the phrase is computed in a
similar manner, by searching for $s[j+|p_{i_1}|+1..n]$ in the trie.

After computing $p_i$ we modify the trie by a standard procedure so that there is a marked node representing $p_i$:
first, we traverse the trie from the root finding the longest prefix of $p_i$ present in the trie,
then, possibly, create one or two new nodes, and, finally, mark the node (which, probably, did not exist before)
corresponding to $p_i$ (the details can be found in any stringology textbook).

The time taken to compute a new phrase and update the trie afterwards is bounded by
$O(m + \min(z,m)\log\sigma)$, where $m = O(n)$ is the length of the longest phrase
(and therefore an upper bound on the length of the longest
path in the trie), $\min(z,m)$ is an upper bound on the number of branching nodes, and $\log\sigma$ is
the time taken to find the appropriate outgoing edge at each branching node during downward traversal.
Over all $z$ phrases the runtime is thus $O(z(m + \min(z,m)\log\sigma))$.

The LZMW construction algorithm of Miller and Wegman~\cite{MillerWegman} is analogous but, unlike the LZD algorithm, when we go to compute
the phrase $p_i$, the trie contains the strings $p_1p_2, p_2p_3, \ldots, p_{i-2}p_{i-1}$ and the nodes corresponding to these strings are marked. One can easily show that the running time of this algorithm is $O(z(m + \min(z,m)\log\sigma))$, where $z$ and $m$ are defined analogously as for LZD.

We call both these algorithms \emph{na\"{\i}ve}.

\paragraph{Worst-case time of the na\"{\i}ve algorithms.}

Now let us investigate the worst-case time complexity of the na\"{\i}ve LZD and LZMW construction algorithms.
\begin{theorem}
The na\"{\i}ve LZD and LZMW construction algorithms take time $\Omega(n^{\frac{5}4})$ in the worst case.\label{LowerboundThm}
\end{theorem}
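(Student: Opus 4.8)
The plan is to construct, for arbitrarily large $n$, a string $s$ of length $\Theta(n)$ on which the na\"{\i}ve LZD algorithm performs $\Omega(n^{5/4})$ operations; the LZMW case is symmetric and I would treat it separately at the end. First I would isolate the cost of a single step of the algorithm. When it computes a phrase $p_i=p_{i_1}p_{i_2}$ starting at position $j$, it walks the compacted trie from the root to find the longest prefix of $s[j..n]$ that labels a root path, reaching some depth $d_1$, and then does the same from position $j+|p_{i_1}|$, reaching some depth $d_2$; since every character of these matched prefixes is compared against an edge label of the trie, the step costs $\Theta(d_1+d_2)$ (the extra $\log\sigma$ per branching node only helps a lower bound and can be ignored). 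The crucial point is that the element actually extracted, $p_{i_t}$, is the lowest \emph{marked} node on the matched path, so $d_t$ can be far larger than $|p_{i_t}|$: the walk ``overshoots'' whenever the dictionary already contains a long phrase almost none of whose proper prefixes are marked. The whole argument then comes down to arranging for a modest stretch of the input to be rescanned, nearly in full, by many successive walks.

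Concretely I would set $\ell=\Theta(n^{1/4})$ and build $s$ as a concatenation of $Q=\Theta(\sqrt n)$ independent blocks, each over its own fresh sub-alphabet so that blocks cannot interfere. Block $q$ starts with a \emph{priming} part whose LZD parse forces the dictionary to contain the full suffix chain $R_q[1..\ell],R_q[2..\ell],\dots,R_q[\ell..\ell]$ of a length-$\ell$ word $R_q$ chosen so that its suffixes are pairwise prefix-incomparable, while keeping every proper prefix $R_q[s..t]$ with $2\le t\le\ell-1$ out of the dictionary; in the trie this makes each suffix-chain phrase $R_q[s..\ell]$ a root path of length $\ell-s+1$ carrying no marked node strictly between depth $1$ and its endpoint. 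Since inserting a length-$(\ell-s+1)$ phrase consumes $\Theta(\ell-s)$ input characters, the priming part has length $\Theta(\ell^2)$, which is affordable because $Q\cdot\Theta(\ell^2)=\Theta(n)$. The priming part is followed by $\Theta(\ell)$ short \emph{attack regions} built from $R_q$ and fresh separators; the design is such that parsing one region causes a walk that overshoots to depth $\Theta(\ell)$ along a suffix-chain path yet emits only a short phrase, and this happens for a constant fraction of the $\Theta(\ell)$ positions of the region, so one region costs $\Theta(\ell^2)$ while consuming only $\Theta(\ell)$ characters. Thus block $q$ contributes $\Theta(\ell)\cdot\Theta(\ell^2)=\Theta(\ell^3)$ operations against $\Theta(\ell^2)$ characters, and summing over the $Q$ blocks gives $|s|=\Theta(Q\ell^2)=\Theta(n)$ and running time $\Theta(Q\ell^3)=\Theta(n\ell)=\Theta(n^{5/4})$.

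The step I expect to be the main obstacle — and the technical heart of the proof — is controlling how the dictionary degrades \emph{within} a block: each attack region inserts new phrases, and because $p_{i_1}$ is always a prefix of some $R_q[s..\ell]$, the combined phrase $p_i=p_{i_1}p_{i_2}$ tends to push a marked node deeper down that path, threatening to make later walks shallow. One must show that, with the right choice of $R_q$ (prefix-incomparable suffixes are what make the affected paths mutually independent), the right placement of separators, and the right priming part, this pollution accumulates slowly enough that the overshoot potential survives for $\Theta(\ell)$ regions but not much longer; I would formalise this as an invariant on the set and depths of marked nodes on the suffix-chain paths and prove it by induction on the region index, then verify the base case (the priming part produces exactly the intended suffix chain and nothing that interferes) and conclude by summing the per-region costs. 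Finally, the na\"{\i}ve LZMW algorithm differs only in that its trie stores the adjacent-phrase pairs $p_{j-1}p_j$ and each phrase is found with a single walk; the same priming-then-attack blocks, with phrases and phrase-pairs interchanged, force the identical behaviour, so I would exhibit the LZMW instance explicitly and note that its analysis is word-for-word the same, omitting the repetition.
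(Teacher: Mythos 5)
Your high-level mechanism is the right one (and matches the paper's): the walk descends as far as the text agrees with some long dictionary string, while the phrase actually emitted is only the lowest \emph{marked} node on that path, so the cost of a step can vastly exceed the text consumed. You also correctly identify the central difficulty, namely that every new phrase marks a node and tends to push marked nodes down the very paths you want to keep ``clean''. The gap is that your concrete parameterization cannot survive this pollution, and the invariant you hope to prove by induction does not exist. In your blocks the only deep trie paths are the $\ell$ suffix paths of a single word $R_q$ with $\ell=\Theta(n^{1/4})$, and you want $\Theta(\ell^2)$ walks per block, each matching to depth $\Theta(\ell)$ while emitting $O(1)$ characters. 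But whenever the \emph{first} part of a phrase is such a deep-cheap walk on the path of $R_q[s..\ell]$, the text necessarily continues inside $R_q$ past the emitted part, so the second part is again a substring of $R_q$ and the concatenated phrase is a prefix of $R_q[s..\ell]$: pollution of path $s$ is forced, not merely ``threatening''. Consequently the deepest marked depth on path $s$ strictly increases with every cheap event anchored there, so the $t$-th such event consumes at least $t$ characters of text; with block length $\Theta(\ell^2)$ this gives $\sum_s T_s^2=O(\ell^2)$ over the $\ell$ paths, hence by Cauchy--Schwarz at most $O(\ell^{3/2})$ deep-cheap walks per block, each costing $O(\ell)$, i.e.\ $O(\ell^{5/2})$ per block and $O(n^{9/8})$ overall --- short of $\Omega(n^{5/4})$ no matter how $R_q$, the separators, or the priming are chosen. (Trying to hide the deep walk in the \emph{second} part does not help: the $\Theta(\ell)$ characters of the run that realize the deep match must themselves be parsed, and parsing them either produces long phrases, killing the cost/length ratio, or short ones, which pollute.)

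The paper escapes this with three length scales rather than your two. The overshoot paths are the strings $w_k[j..k]w^k$ of length $\Theta(n^{3/4})$, the phrases emitted during an attack $z_i$ have length $\Theta(\sqrt n)$ (roughly one period $w$), and there are $\Theta(n^{1/4})$ attack rounds. Crucially, each emitted phrase $w_i[j..k]w_{i+1}\cdots w_k w_1\cdots w_{i-1}w_i[1..j]$ begins with $w_i[j..k]$, $i<k$, so it is never a prefix of any overshoot path $w_k[j..k]w^k$: the marked nodes on the overshoot paths are only the pre-seeded strings $w_k[j..k]w_1\cdots w_{i-1}w_i[1..j]$, whose depth grows only by $\Theta(k)$ per round and stays $O(\sqrt n)\ll n^{3/4}$. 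Arranging the phrase boundaries so that this happens requires the preprocessing gadgets $x_i$ (inserted between attacks, over the \emph{same} alphabet, not fresh per block) which plant exactly the right dictionary strings before each round --- this is the technical heart that your sketch defers to ``the right choice of $R_q$ and separators'', and with overshoot depth $\Theta(\ell)$ and $O(1)$-length phrases it provably cannot be carried out. The LZMW adaptation is also not ``word-for-word'' the same (the paper needs three-part jumps via the gadgets $t_{i,j},u_{i,j},v_{i,j}$), but that is a minor point compared with the main gap.
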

\begin{proof}
Let $k \ge 8$ be an integer that is a power of two. We will describe a string $s$ of length $n = \Theta(k^4)$ for which the basic LZD construction algorithm (see the above discussion) spends $\Theta(n^{\frac{5}4})$ time to process. The string $s$ is composed of pairwise distinct letters $a_{i,j}$, for $i, j \in [1..k]$, and ``separator'' letters, all of which are denoted $\SEP$ and supposed to be distinct. We will first construct a prefix $s'$ of $s$ that forces the algorithm to fill the dictionary with a set of strings that are used as building blocks in further constructions. To this end, denote (with parentheses used only for convenience):
$$
\begin{array}{l}
w_i = a_{i,1}a_{i,2} \cdots a_{i,k}\text{ for }i = 1,2,\ldots,k\text{ and }w = w_1w_2 \cdots w_k,\\
s_{pre,i} = w_i[1..2]w_i[1..3] \cdots w_i[1..k]\text{ for }i = 1,2,\ldots,k,\\
s_{suf,i} = w_i[k{-}1..k]w_i[k{-}2..k] \cdots w_i[2..k]\text{ for }i = 1,2,\ldots,k,\\
p = (s_{pre,1}s_{pre,2} \cdots s_{pre,k}) (s_{suf,1}s_{suf,2} \cdots s_{suf,k}),\\
q = (w_{k-2}w_{k-1})(w_{k-3}w_{k-2}w_{k-1}) \cdots (w_1w_2{\cdots}w_{k-1}) (w),\\
s' = pq\cdot w^{2^1} w^{2^2} \cdots w^k (w_k[2..k]w^k) (w_k[3..k]w^k) \cdots (w_k[k..k]w^k).
\end{array}
$$
Analyzing the prefix $p$ of $s'$, it is clear that the LZD construction algorithm adds to the dictionary exactly all prefixes and suffixes of the strings $w_i$ for $i = 1,2,\ldots, k$; parsing the string $q$, the algorithm adds the strings $w_{k-2}w_{k-1}, w_{k-3}w_{k-2}w_{k-1}, \ldots, w_1w_2\cdots w_{k-1}$, and $w_1w_2 \cdots w_k = w$; then, processing the string $w^{2^1} w^{2^2} \cdots w^k$, the algorithm adds $w^{2^1}, w^{2^2}, \ldots, w^k$ (we are interested only in $w^k$); finally, the strings $w_k[2..k]w^k, w_k[3..k]w^k, \ldots, w_k[k..k]w^k$ are added. So, the algorithm adds to the dictionary exactly the following strings:
\begin{itemize}
\item all prefixes and suffixes of $w_i$ (including $w_i$ itself) for $i = 1,2,\ldots,k$;
\item $w_{k-2}w_{k-1}, w_{k-3}w_{k-2}w_{k-1}, \ldots, w_1w_2\cdots w_{k-1}$, and $w$;
\item $w^k$ along with $w^{k/2},\ldots, w^{2^2}, w^2$ (we use only $w^k$ in what follows);
\item $w_k[2..k]w^k, w_k[3..k]w^k, \ldots, w_k[k..k]w^k$.
\end{itemize}

It is easy to verify that $|w| = k^2$, $|w^k| = k^3$, and $|s'| = \Theta(k^4)$. (The string $w_k[2..k]w^k w_k[3..k]\-w^k \cdots w_k[k..k]w^k$ contributes the most to the length.)

We first provide an overview of our construction. The main load on the running time of the algorithm is concentrated in the following strings $z_i$:
$$
z_i = w_i[2..k]w_{i+1}\cdots w_k w^{k-2}w_1\cdots w_i\text{ for }i = 1,2,\ldots,k - 2.
$$
Put $s = s'x_1z_1\SEP\SEP x_2z_2\SEP\SEP \cdots x_{k-2}z_{k-2}\SEP\SEP$, where $x_1, \ldots, x_k$ are auxiliary strings defined below. Before processing of $z_i$, the algorithm processes $x_i$ and adds the strings $w_i[j..k]w_{i+1}\cdots w_{k-1}\-w_k[1..j{-}1]$ and $w_k[j..k]w_1\cdots w_{i-1}w_i[1..j]$ for $j \in [2..k]$ to the dictionary (see below). So, analyzing $z_i$, the algorithm consecutively ``jumps'', for $j = 2,3,\ldots, k$, from the string $w_i[j..k]w_{i+1}\cdots w_{k-1}w_k[1..j{-}1]$ to $w_k[j..k]w_1\cdots w_{i-1}w_i[1..j]$ and so on. The crucial point is that, while analyzing $w_k[j..k]w_1\cdots w_{i-1}w_i[1..j]$, the algorithm does not know in advance that the string $w_k[j..k]w^k$ from the dictionary does not occur at this position and, since the length of the longest common prefix of the strings $w_k[j..k]w^k$ and $w_k[j..k]w^{k-j}w_1\cdots w_i\SEP\SEP$ is $\Theta(k-j+1 + |w^{k-j}|)$, spends $\Theta(|w^{k-j}|) = \Theta((k-j)k^2)$ time verifying this. Therefore, the analysis of the string $s$ takes $\Theta((k - 2)\sum_{j=2}^{k} (k-j)k^2) = \Theta(k^5)$ time overall. Since $|z_i| = O(k^3)$ and, as it is shown below, $|x_i| = O(k^3)$, we have $n = |s| = \Theta(k^4)$ and the processing time is $\Theta(n^{\frac{5}4})$ as required. We now describe this in more detail.

We prove by induction that the following invariant is maintained: when the algorithm starts the processing of the suffix $x_iz_i\SEP\SEP\cdots x_{k-2}z_{k-2}\SEP\SEP$ of the string $s$ ($x_i$ are defined below), the dictionary contains the following set of strings:
\begin{itemize}
\item ``building blocks'' constructed during the processing of $s'$;
\item pairs of separators $\SEP\SEP$ (recall that all separators are distinct);
\item for each $i' \in [1..i{-}1]$ and $j \in [2..k]$:
\subitem $w_{i'}[j..k]w_{i'+1}\cdots w_{k-1}w_k[1..j{-}1]$ and $w_k[j..k]w_1\cdots w_{i'-1}w_{i'}[1..j]$,
\subitem $w_{i'}[j..k]w_{i'+1}\cdots w_{k-1}$ and $w_k[j..k]w_1\cdots w_{i'-1}$,
\subitem $w_{i'}[j..k]w_{i'+1}\cdots w_k w_1\cdots w_{i'-1}w_{i'}[1..j]$.
\end{itemize}

The strings from the last two lines in the above list are not used and appear as byproducts. (But it is still important to have them in mind to verify that the algorithm works as expected.)  So, assume that, by inductive hypothesis, the invariant holds for all $i' \in [1..i{-}1]$ (it is trivial for $i = 1$).

Define $x_i$ as follows (the parentheses are only for visual ease):
$$
\begin{array}{l}
u'_{i,j} = (w_k[j..k]w_1\cdots w_{i-1} w_i[1..j]),\\
u_{i,j} = (w_k[j..k]w_1\cdots w_{i-2}w_{i-1}[1..j]) (w_{i-1}[j{+}1..k]) u'_{i,j},\\
v_{i,j} = (w_i[j..k]w_{i+1}\cdots w_{k-1}) (w_i[j..k]w_{i+1}\cdots w_{k-1}w_k[1..j{-}1]),\\
x_1 = (u'_{1,2}\SEP\SEP u'_{1,3}\SEP\SEP\cdots u'_{1,k-1}\SEP\SEP u'_{1,k}\SEP\SEP) (v_{1,2}\SEP\SEP v_{1,3}\SEP\SEP\cdots v_{1,k}\SEP\SEP),\\
x_i = (u_{i,2}\SEP\SEP u_{i,3}\SEP\SEP\cdots u_{i,k-1}\SEP\SEP u'_{i,k}\SEP\SEP) (v_{i,2}\SEP\SEP v_{i,3}\SEP\SEP\cdots v_{i,k}\SEP\SEP),\text{ for }i\ne 1.
\end{array}
$$

Observe that $|x_i| = O(k^3)$. Using the inductive hypothesis, one can prove that the algorithm adds the strings $w_k[j..k]w_1\cdots w_{i-1}$ (for $j \ne k$), $w_k[j..k]w_1\cdots w_{i-1}w_i[1..j]$, $w_i[j..k]w_{i+1}\cdots w_{k-1}$, and $w_i[j..k]w_{i+1}\cdots w_{k-1}w_k[1..j{-}1]$ for $j \in [2..k]$ to the dictionary after the processing of $x_i$ (plus several pairs $\SEP\SEP$). It remains to show that the algorithm adds exactly the strings $w_i[j..k]w_{i+1}\cdots w_k w_1 \cdots w_{i-1}w_i[1..j]$, for $j \in [2..k]$, to the dictionary when processing $z_i$.

Observe that, for $j \in [2..k]$, $w_i[j..k]w_{i+1}\cdots w_{k-1}w_k[1..j{-}1]$ is the longest string from the dictionary that has prefix $w_i[j..k]$, and $w_k[j..k]w_1\cdots w_{i-1}w_i[1..j]$ is the longest string from the dictionary that has prefix $w_k[j..k]$ and does not coincide with $w_k[j..k]w^k$. Hence, the algorithm consecutively ``jumps'' over the substrings $w$ of the string $z_i$ adding after each such ``jump'' the string $w_i[j..k]w_{i+1}\cdots w_k w_1 \cdots w_{i-1}w_i[1..j]$ to the dictionary (for $j = 2,3,\ldots,k$). No other strings are added.

Each time the algorithm processes a substring $w_k[j..k]w_1\cdots w_{i-1}w_i[1..j]$, it also verifies in $\Theta(ki + |w^{k-j}|)$ time whether the string $w_k[j..k]w^k$ occurs at this position. Therefore, by the above analysis, processing takes $\Theta(|s|^{\frac{5}4})$ time.

An analogous troublesome string for the na\"{\i}ve LZMW construction algorithm is as follows (again, all separators ${\SEP}$ are assumed to be distinct letters):
$$
\begin{array}{l}
w_i = a_{i,1}a_{i,2} \cdots a_{i,k}\text{ and }w = w_1w_2 \cdots w_k,\\
s_{pre,i} = w_i[1..2]\SEP w_i[1..3]\SEP \cdots\SEP w_i[1..k]\SEP,\\
s_{suf,i} = w_i[k{-}1..k]\SEP w_i[k{-}2..k]\SEP \cdots\SEP w_i[2..k]\SEP,\\
p = s_{pre,1} s_{pre,2} \cdots s_{pre,k} s_{suf,1} s_{suf,2} \cdots s_{suf,k},\\
q = w_{k-2}w_{k-1}\SEP w_{k-3}w_{k-2}w_{k-1}\SEP \cdots\SEP w_1w_2{\cdots}w_{k-1}\SEP w\SEP,\\
s' = p q w^{2^1}\SEP w^{2^2}\SEP \cdots\SEP w^k\SEP w_k[2..k]w^k\SEP w_k[3..k]w^k\SEP \cdots\SEP w_k[k..k]w^k\SEP,\\
y_j = w_k[j..k]w_1\SEP w_k[j..k]w_1w_2[1..j]\SEP,\\
t_{i,j} = w_{i-2}[j{+}1..k]w_{i-1}[1..j]\SEP w_{i-1}[j{+}1..k]w_i[1..j],\\
u_{i,j} = (w_k[j..k]w_1\cdots w_{i-3}w_{i-2}[1..j]) (w_{i-2}[j{+}1..k]w_{i-1}[1..j]),\\
v_{i,j} = w_i[j..k]w_{i+1}\cdots w_{k-1}\SEP w_i[j..k]w_{i+1}\cdots w_{k-1}w_k[1..j{-}1],\\
x_i = t_{i,2}\SEP t_{i,3}\SEP\cdots\SEP t_{i,k-1}\SEP u_{i,2}\SEP u_{i,3}\SEP\cdots\SEP u_{i,k}\SEP v_{i,2}\SEP v_{i,3}\SEP\cdots\SEP v_{i,k}\SEP,\\
z_i = w_i[2..k]w_{i+1}\cdots w_k w^{k-2}w_1\cdots w_i\SEP,\\
s = s'y_2 y_3\cdots y_k x_4 z_4 x_6 z_6 \cdots x_{2j} z_{2j} \cdots  x_{k-2} z_{k-2}.
\end{array}
$$

Let us explain on a high level why the LZMW algorithm works slowly on $s$.
While analyzing the prefix $s' y_2 y_3 \cdots y_k$, the algorithm adds a number of ``building block'' strings into the LZMW dictionary, including the strings $w[j..k]w^k$ for $j = 2,3,\ldots,k$ (recall that, unlike the LZD dictionary containing phrases, the LZMW dictionary contains pairs of adjacent phrases).
Before the processing of $z_i$, the algorithm processes $x_i$ and adds the strings $w_i[j..k]w_{i+1}\cdots w_{k-1}w_k[1..j{-}1]$ (from $v_{i,j}$), $w_k[j..k]w_1\cdots w_{i-2}w_{i-1}[1..j]$ (from $u_{i,j}$), and $w_{i-1}[j{+}1..k]w_i[1..j]$ (from $t_{i,j}$) to the dictionary. The concatenation of these three strings is $w_i[j..k]w_{i+1}\cdots w_kw_1\cdots w_{i-1}w_i[1..j]$, so, analyzing $z_i$, the algorithm consecutively ``jumps'', for $j = 2,3,\ldots, k$, from the string $w_i[j..k]w_{i+1}\cdots w_{k-1}w_k[1..j{-}1]$ to $w_k[j..k]w_1\cdots w_{i-2}w_{i-1}[1..j]$ and then to $w_{i-1}[j{+}1..k]w_i[1..j]$, thus producing three new phrases (and then moves on to $j{+}1$). The point is that, while analyzing the string $w_k[j..k]w_1\cdots w_{i-2}w_{i-1}[1..j]$, the algorithm does not know in advance that the string $w_k[j..k]w^k$ from the dictionary does not occur at this position and, since the length of the longest common prefix of the strings $w_k[j..k]w^k$ and $w_k[j..k]w^{k-j}w_1\cdots w_i\SEP\SEP$ is $\Theta(k-j+1 + |w^{k-j}|)$, spends $\Theta(|w^{k-j}|) = \Theta((k-j)k^2)$ time verifying this. Therefore, the analysis of the string $s$ takes $\Theta((k/2)\sum_{j=2}^{k} (k-j)k^2) = \Theta(k^5)$ time overall. Since $n = |s| = \Theta(k^4)$, the processing time is $\Theta(n^{\frac{5}4})$ as required. We omit the detailed proof since it is very similar to the LZD case.

To additionally verify the correctness of both constructed examples, we performed the na\"{\i}ve LZD and LZMW algorithms (with some diagnostics to track their execution) on the examples for small $k$ and, indeed, observed the expected ``bad'' behavior in the special positions described above. Our verifying code (it can be found in~\cite{LZDandLZMWcode}) thoroughly checks the correspondence of the behavior of the parsers in the special positions to the behavior discussed in the above text. Thus, we hope that the correctness of both our constructions is well supported.
\cqed
\end{proof}

We now explain how to decrease the alphabet size in the examples of  Theorem~\ref{LowerboundThm}.
The construction for both parsing schemes relies on the following reduction.
\begin{lemma}\label{lem:reduction}
Consider the parsing scheme  LZD or LZMW and a string $s\in \Sigma^*$.
There exists a string $t\in \{0,1\}^*$ of length $\Theta(|\Sigma|\log |\Sigma|)$
and a morphism $\phi$ with $\phi(\Sigma)\subseteq \{0,1\}^{\ell}$ for $\ell = \Theta(\log |\Sigma|)$ such that the parsing of $t \cdot \phi(s)$ consists
of the parsing of $t$ followed by the image with respect to $\phi$ of the parsing of $s$.
\end{lemma}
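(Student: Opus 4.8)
The plan is to encode each letter of $\Sigma$ by a fixed-length binary codeword and to prepend a short ``priming'' string $t$ whose only job is to load precisely the set of codewords into the dictionary before $\phi(s)$ is read; from that point on the parser is forced to traverse $\phi(s)$ codeword by codeword, imitating the parse of $s$ phrase for phrase. Concretely, I would set $\ell=\Theta(\log|\Sigma|)$, fix an injective morphism $\phi\colon\Sigma^*\to\{0,1\}^*$ with $\phi(\Sigma)\subseteq\{0,1\}^{\ell}$, and -- to keep the bookkeeping clean -- draw the codewords from a structured subset of $\{0,1\}^{\ell}$, e.g.\ the length-$\ell$ strings that begin with $0$ and contain no factor $11$; there are still $\Theta(\varphi^{\ell})\ge|\Sigma|$ of these once $\ell$ is a large enough multiple of $\log|\Sigma|$. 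The point of the restriction is that no concatenation of codewords ever contains $11$, so $11$ (or a longer run of $1$s) can play exactly the role that the separator letters $\SEP$ play in the previous constructions.

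The first, easy half of the argument is the lockstep itself: once the current dictionary equals $\{0,1\}$ together with $\phi$ applied to the current dictionary of the $\Sigma$-parse, the two parses remain in step. This I would prove by induction on the number of phrases, using the defining property of a block code -- $\phi(u)$ is a prefix of $\phi(v)$ iff $u$ is a prefix of $v$, and $|\phi(u)|=\ell|u|$ is monotone in $|u|$. Consequently, reading $\phi(s)$ at a position aligned to a codeword boundary, the longest dictionary phrase that is a prefix of the remaining suffix is exactly $\phi(y)$, where $y$ is the longest dictionary phrase that is a prefix of the corresponding suffix of $s$; the single letters $0,1$ are never the longest match, since a full codeword of length $\ell>1$ is always available. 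Consuming $\phi(y)$ advances the parser by $\ell|y|$ bits, i.e.\ to the next codeword boundary, so the invariant is preserved and the phrase appended is $\phi$ of the phrase appended on the $\Sigma$-side. The LZMW case is the same induction with the dictionary replaced by concatenations of adjacent phrase pairs; its ``$j\le i-2$'' offset is respected automatically because codeword boundaries coincide on both sides, and the only extra care is to make the last few phrases of $t$ clean concatenations of codewords, so that $t$ is parsed completely and the pairs straddling the $t$--$\phi(s)$ junction are themselves of the form $\phi(\cdot)$.

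The technical heart -- and the step I expect to be the real obstacle -- is constructing $t$ and proving that parsing it ends at a phrase boundary and leaves the dictionary equal to $\{0,1\}\cup\phi(\Sigma)$ together with only ``harmless'' entries, where an entry is harmless if it has length $<\ell$ (hence is always beaten at a codeword boundary by a full codeword) or contains the factor $11$ (hence is never a prefix of any $\phi(v)$). I would build $t$ as a concatenation of small gadgets delimited by runs of $1$s: first a preamble that forces a pool of short building blocks -- the proper halves, quarters, \dots{} of the codewords, of total length $O(|\Sigma|\log|\Sigma|)$ -- into the dictionary, and then $|\Sigma|$ gadgets, the $a$-th of which writes $\phi(a)$ out so that it is parsed as a single phrase formed by concatenating two already-present building blocks. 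The length accounting is favourable: the geometric sum over the levels of building blocks is dominated by its top term, and the $|\Sigma|$ codeword gadgets contribute $\Theta(|\Sigma|\ell)$, so $|t|=\Theta(|\Sigma|\log|\Sigma|)$ (the lower bound is immediate: the $|\Sigma|$ codewords are distinct length-$\ell$ phrases -- for LZMW, distinct length-$\ell$ phrase pairs, each bit of $t$ lying in at most two of them -- and producing each requires consuming $\ell$ fresh bits of $t$). What is genuinely delicate is purely local: verifying that inside each gadget the LZD/LZMW parser does exactly what is intended -- in particular that the ``second element'' of a phrase does not greedily overshoot into the next gadget -- which is precisely what the separators and the $11$-avoiding codewords buy us. This is a routine but tedious case analysis, and, following the same practice as elsewhere in the paper, I would corroborate it with a small computer check.
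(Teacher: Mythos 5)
Your overall architecture matches the paper's: prime the dictionary with a short string $t$ so that it contains a length-$\ell$ codeword for every letter, then argue by induction (the block-code property, plus the fact that single bits never win at a codeword boundary) that the parse of $\phi(s)$ proceeds codeword-aligned in lockstep with the parse of $s$. That second half is essentially the paper's invariant ($\hat D\cap(\phi(\Sigma)\cdot\{0,1\}^*)=\phi(D)$) and is fine. The genuine gap is in the half you yourself flag as the obstacle: you never actually construct $t$, and the mechanism you propose for it---generic $11$-avoiding codewords built from ``building blocks'' inside gadgets delimited by runs of $1$s---has a concrete failure mode that is not addressed. Over a binary alphabet the separators are not inert: every run of $1$s, and every phrase (or, for LZMW, adjacent pair) overlapping one, enters the dictionary, and later the greedy parser, while standing inside a separator, can pick as its (second) element a previously created string of the form ``$1^j$ followed by a prefix of the next gadget's content.'' This swallows part of the following gadget, destroys the intended alignment, and with it the claim that each codeword is formed exactly from its two halves and that only ``harmless'' entries arise. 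Dismissing this as a routine case analysis backed by a computer check is precisely where the proof is missing; the paper's construction exists to avoid it, using no separators at all: the recursive families $A_L=\{xy: x,y\in A_{L-1},\ x\le y\}$ (resp.\ $B_L$ minus $1^{2^{L-1}}0^{2^{L-1}}$, fed to the parser via the gadgets $b(\beta_m)$) are designed so that the parser itself provably emits exactly these strings as phrases (resp.\ adjacent pairs), which is what the two Claims establish.

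A second, smaller problem is your treatment of the $t$--$\phi(s)$ junction for LZMW. Arranging that the straddling pair of phrases is ``of the form $\phi(\cdot)$'' does not help---it hurts: it would insert into the binary dictionary the image of a two-letter string $xa$ that need not be a pair of adjacent phrases on the $\Sigma$ side, so a later match could be longer in the binary parse than in the parse of $s$, breaking lockstep. The paper does the opposite: it chooses $\phi(s[1])=0^{2^L}$ and shows that the one extra junction pair ($\beta_m 0^{2^L}$ with $\beta_m\in B_{L-1}$) can never be a prefix match at a codeword boundary because $\beta_m 0^{2^{L-1}}\notin\phi(\Sigma)$. Your length accounting for $t$ is plausible (and your lower-bound remark is correct), but as it stands the lemma's central claim---that parsing $t$ terminates at a phrase boundary leaving the dictionary equal to $\{0,1\}\cup\phi(\Sigma)$ plus provably harmless entries---is asserted, not proved.
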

\begin{proof}
We analyze the two parsing schemes separately.
For LZD, we recursively define $A_L \subseteq \{0,1\}^{2^L}$, setting $A_0 = \{0,1\}$
and $A_L = \{xy : x,y\in A_{L-1} \wedge x \le y\}$ for $L>0$.
Let $(\alpha_i)_{i=1}^\infty$ be the infinite sequence of all elements of $A_L$, for all $L\ge 1$,
with members of each set $A_L$ listed in the lexicographic order; e.g., $\alpha_1,\ldots, \alpha_{12} = 00, 01, 11, 0000, 0001, 0011, 0101, 0111, 1111, 00000000, 00000001, 00000011$.
We will define $t=\alpha_1\cdots \alpha_m$ for some $m$.  Let us characterize parsings of such strings.
\begin{claim}
For any non-negative integer $m$ and any string $w\in \{0,1\}^*$, the first $m$ phrases of the LZD parsing of the binary string $\alpha_1\cdots \alpha_m\cdot w$ are $\alpha_1,\ldots,\alpha_m$.
\end{claim}
\begin{proof}
We proceed by induction on $m$; the base case of $m=0$ is trivial.

For $m>0$, the inductive assumption implies that the first $m-1$  phrases are $\alpha_1,\ldots,\alpha_{m-1}$.
Our goal is to prove that the $m$th phrase is $\alpha_m$.
Before processing $\alpha_m$, the LZD dictionary is $D = \{0, 1, \alpha_1,\ldots,\alpha_{m-1}\}$.
Suppose that $\alpha_m=xy \in A_L$ with $x,y \in A_{L-1}$. Recall that $x\le y$; consequently, $D\cap \left(y\cdot \{0,1\}^*\right) = \{y\}$ and
$$ D\cap \left(x\cdot \{0,1\}^*\right) = \{x\} \cup \{xy' : y'\in A_{L-1} \wedge x \le y' < y\}.$$
Thus, the longest prefix of $\alpha_m\cdot w$ contained in $D$ is $x$, and the longest prefix of $y\cdot w$ contained in $D$ is $y$.
This means that the $m$th phrase is indeed $\alpha_m=xy$.\cqed
\end{proof}

Consider a string $s\in \Sigma^n$.
We choose the smallest $L$ with $|A_L|\ge |\Sigma|$ and define $t=\alpha_1\cdots \alpha_m$ so that $t$ is shortest possible
and the LZD dictionary after processing $t$ contains at least $|\Sigma|$ elements of $A_L$.
The morphism $\phi$ is then defined by injectively mapping $\Sigma$ to these dictionary strings from $A_L$.

Note that $|A_{L-1}| \le |\Sigma|$ and $m \le |\Sigma| + \sum_{\ell=1}^{L-1} |A_\ell|$, so we have $m = \Theta(|\Sigma|)$,
$\ell = 2^L=\Theta(\log |\Sigma|)$, and $|t| = \Theta(|\Sigma|\log |\Sigma|)$, as desired.

We are to prove that the LZD parsing of $t\cdot\phi(s)$ is $\alpha_1,\ldots,\alpha_m,\phi(p_1),\ldots,\phi(p_z)$,
where $p_1,\ldots,p_z$ is the LZD parsing of $s$. For this, we inductively prove that the LZD dictionary $D$ after parsing $p_1\cdots p_i$ is related
to the LZD dictionary $\hat{D}$ after parsing $t\cdot \phi(p_1\cdots p_i)$ by the following invariant: $\hat{D} \cap \left(\phi(\Sigma)\cdot \{0,1\}^*\right) = \phi(D).$
The base case follows from the claim ($\hat{D}\cap \left(\phi(\Sigma)\cdot \{0,1\}^*\right) = \phi(\Sigma)=\phi(D)$), and the inductive step is straightforward.
This completes the proof for the LZD scheme.

The construction for LZMW is more involved, but the idea is the same.
We recursively define $B_L \subseteq \{0,1\}^{2^L}$, setting $B_0 = \{0,1\}$
and $B_L = \{xy : x,y\in B_{L-1} \wedge xy \ne 1^{2^{L-1}}0^{2^{L-1}}\}$ for $L>0$.
Let $(\beta_i)_{i=1}^\infty$ be the infinite sequence that lists all elements of $B_L$ consecutively for all $L\ge 0$,
with members of each $B_L$ listed in the lexicographic order (i.e., $(\beta_i)_{i=1}^\infty$ is defined by analogy with $(\alpha_i)_{i=1}^\infty$ for LZD but starting with $L = 0$).
For $\beta_m \in B_L$, define $b(\beta_m) = \beta_M \beta_m\cdot \beta_{M+1}\beta_m \cdots\beta_{m-1} \beta_{m}\cdot \beta_{m}$,
where $\beta_M=0^{2^{L}}$ is the first element of $B_L$ in $(\beta_i)_{i=1}^\infty$.
For example, $b(\beta_1)\cdots b(\beta_6) = 0\cdot 0 \ 1\ 1\cdot  00\cdot 00\ 01\ 01\cdot  00\ 11\ 01\ 11\ 11\cdot 0000.$
\begin{claim}
For $m\ge 1$, consider a binary string $ b(\beta_1)\cdots b(\beta_m)\cdot 0^{|\beta_m|} \cdot w$ for $w\in \{0,1\}^*$.
The LZMW parsing decomposes its fragments $b(\beta_i)$ into phrases of length $|\beta_i|$.
\end{claim}
\begin{proof}
We proceed by induction on $m$. The base case $m=1$ is straightforward: it suffices to note that the first phrase of $0\cdot 0 \cdot w$ is $0$.
Below, we consider $m>1$.

First, suppose that $\beta_m = 0^{2^L}$, i.e., $\beta_{m-1} = 1^{2^{L-1}} \in B_{L-1}$.
Note that $b(\beta_m)$ starts with $0^{2^{L-1}}$, so the inductive hypothesis
yields that the prefix $ b(\beta_1)\cdots b(\beta_{m-1})$ is parsed as desired.
Observe that after parsing this prefix, the LZMW dictionary
is $D=\{1^{2^{\ell-1}} 0^{2^{\ell}} : 0 < \ell < L\} \cup \bigcup_{\ell=0}^{L} B_{\ell}.$
Consequently, we obtain $D\cap \left(B_L \cdot \{0,1\}^*\right) = B_L$ and, therefore, $b(\beta_m)=\beta_m$ is parsed as claimed.

Finally, suppose that $\beta_m\in B_L \setminus \{0^{2^L}\}$. In this case, $\beta_{m-1}\in B_L$ and $\beta_M= 0^{2^L}$ for some $M < m$.
Since $b(\beta_m)$ starts with $\beta_M = 0^{2^L}$, the inductive hypothesis lets us assume that the prefix $ b(\beta_1)\cdots b(\beta_{m-1})$ is parsed as desired.
Due to $1^{2^{L-1}}0^{2^L-1}\notin B_L$, after parsing this prefix, the LZMW dictionary $D$ satisfies:
$$D\cap (B_L\cdot \{0,1\}^{*}) = B_L \cup \{\beta_k\beta_{k'} : M \le k,k' < m\wedge  (k,k')\ne (m{-}1,M)\}.$$
Let us consider the parsing of $b(\beta_m)0^{2^L} w = \beta_M \beta_m \cdot \beta_{M+1}\beta_m\cdots \beta_{m-1}\beta_{m}\cdot \beta_{m}\cdot 0^{2^L} w$. One can inductively prove that
before parsing $\beta_k \beta_m \cdot \beta_{k+1}\cdots$, for $M\le k < m$, we have $D\cap \left(\beta_k \cdot \{0,1\}^*\right) = \{\beta_k\}\cup \{\beta_k \beta_{k'} : M \le k' < m\}$,
so the subsequent phrase is $\beta_k$. Next, before parsing $\beta_m\cdot \beta_{k+1} \cdots$, for $M\le k < m$, we have
$D\cap \left(\beta_m\cdot \{0,1\}^*\right) = \{\beta_m\}\cup \{\beta_m \beta_{k'} : M < k' \le k\}$, so the subsequent phrase is $\beta_m$.
Finally, before parsing $\beta_m\cdot 0^{2^L} w$, we have
$D\cap \left(\beta_m\cdot \{0,1\}^*\right) = \{\beta_m\}\cup \{\beta_m \beta_{k'} : M < k' < m\}$, so the last phrase is also $\beta_m$.
Thus, $b(\beta_m)$ is parsed as claimed.\cqed
\end{proof}
Consider a string $s\in \Sigma^n$.
We choose the smallest $L$ with $|B_L|\ge |\Sigma|$ and define $t=b(\beta_1)\cdots b(\beta_m)$ so that $t$ is shortest possible
and the LZMW dictionary after processing $t$ contains at least $|\Sigma|$ members of $B_L$ (note that $\beta_m \in B_{L-1}$ in this case).
The morphism $\phi$ is then defined by injectively mapping $\Sigma$ to these dictionary strings from $B_L$.
Moreover, we put $\phi(s[1]) = 0^{2^L}$ so that the claim is applicable for $t\cdot \phi(s)$.
The remaining proof is analogous to the LZD counterpart. We only need to observe that the LZMW dictionary
additionally contains  $\beta_m0^{2^L}$, but $\beta_m 0^{2^{L-1}}\notin \phi(\Sigma)$ and, hence, this does not affect the parsing of $t\cdot\phi(s)$.  \cqed
\end{proof}

The hard binary examples are now straightforward to derive.
\begin{theorem}
The na\"{\i}ve LZD and LZMW parsing algorithms take time $\Omega(n^{5/4} / \log^{1/4} n)$ in the worst case even on a binary alphabet.
\end{theorem}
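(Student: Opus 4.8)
The plan is to obtain the binary bound by composing the large-alphabet lower bound of Theorem~\ref{LowerboundThm} with the binary reduction of Lemma~\ref{lem:reduction}. Fix $k$ a power of two and let $s$ be the hard string of Theorem~\ref{LowerboundThm}: it lives over an alphabet $\Sigma$ with $\sigma := |\Sigma| = \Theta(k^2)$ (the $k^2$ letters $a_{i,j}$ together with $\Theta(k^2)$ distinct separators), it has length $n_0 := |s| = \Theta(k^4)$, and it forces the na\"{\i}ve parser to run for $\Theta(k^5) = \Theta(n_0^{5/4})$ time. I would feed $s$ into Lemma~\ref{lem:reduction}, obtaining a binary string $\hat s := t\cdot\phi(s)$ of length $N := |\hat s| = |t| + \ell\, n_0$, where $\ell = \Theta(\log\sigma) = \Theta(\log k)$ and $|t| = \Theta(\sigma\log\sigma) = \Theta(k^2\log k)$; hence $N = \Theta(k^4\log k) = \Theta(n_0\log n_0)$, the prefix $t$ contributing only a lower-order term.

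The crux is to show that the na\"{\i}ve algorithm spends $\Omega(\ell\, n_0^{5/4}) = \Omega(k^5\log k)$ time on $\hat s$. By Lemma~\ref{lem:reduction} the parsing of $\hat s$ is the parsing of $t$ followed by $\phi$ applied to the parsing of $s$, and its proof establishes that, once $t$ has been consumed, the part of the compacted trie (dictionary) reachable through a block of $\phi(\Sigma)$ is exactly the $\phi$-image of the trie the algorithm maintains while parsing $s$. Consequently every root-to-mismatch traversal the algorithm performs during the parsing of $s$ is mirrored, during the parsing of $\hat s$, by a traversal matching exactly $\ell$ times as many characters; in particular each ``failed'' descent of Theorem~\ref{LowerboundThm} --- which matches $w_k[j..k]w^{k-j}$ of length $\Theta((k-j)k^2)$ against the dictionary string $w_k[j..k]w^k$ before detecting the mismatch --- becomes a descent matching $\Theta(\ell(k-j)k^2)$ characters, and a na\"{\i}ve traversal matching $L$ characters costs $\Omega(L)$ time (one input read and one edge-label comparison per matched character). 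Scaling the $\Theta\bigl((k-2)\sum_{j=2}^{k}(k-j)k^2\bigr) = \Theta(k^5)$ accounting of Theorem~\ref{LowerboundThm} by the factor $\ell$ then gives the claimed $\Omega(\ell k^5)$ lower bound on the time spent inside $\phi(s)$, and hence on the whole run; the LZMW case is identical, using the LZMW halves of Theorem~\ref{LowerboundThm} and Lemma~\ref{lem:reduction}.

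Finally I would re-express the bound in terms of $N$. Since $N = \Theta(n_0\log n_0)$, we have $\log N = \Theta(\log n_0) = \Theta(\log k)$ and $n_0 = \Theta(N/\log N)$, so the running time is
$$
\Omega\bigl(\ell\, n_0^{5/4}\bigr) \;=\; \Omega\!\left(\log N\cdot (N/\log N)^{5/4}\right) \;=\; \Omega\!\left(N^{5/4}/\log^{1/4} N\right),
$$
as required. I expect the only genuinely delicate point to be the middle step: I must verify that applying the fixed length-$\ell$ block code $\phi$ preserves, up to the factor $\ell$, both which dictionary string is the longest match at each position and the lengths of all longest common prefixes exploited in Theorem~\ref{LowerboundThm}, so that the algorithm cannot ``short-circuit'' any of the expensive comparisons. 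This is immediate from $\phi$ being injective on $\Sigma$ with all images of the common length $\ell$, but it should be written out carefully (and, as with Theorem~\ref{LowerboundThm}, it is easy to confirm experimentally for small $k$).
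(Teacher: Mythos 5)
Your proposal is correct and follows essentially the same route as the paper: compose the hard instance of Theorem~\ref{LowerboundThm} with Lemma~\ref{lem:reduction}, observe that the trie work scales by the block length $\ell=\Theta(\log|\Sigma|)$ because the parsing of $\phi(s)$ is the $\phi$-image of the parsing of $s$, and then rewrite $\Omega(n^{5/4}\log|\Sigma|)$ in terms of the new length $\hat n=\Theta(n\log|\Sigma|)$ to get $\Omega(\hat n^{5/4}/\log^{1/4}\hat n)$. The only cosmetic difference is that you specialize to $|\Sigma|=\Theta(k^2)$ while the paper argues generically under $|\Sigma|\le n$, which changes nothing.
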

\begin{proof}
We apply Lemma~\ref{lem:reduction} for a string $s\in \Sigma^*$ of length $n$ constructed in the proof of Theorem~\ref{LowerboundThm}
for the appropriate parsing algorithm, which results in a binary string $\hat{s}:=t\cdot \phi(s)$.
Without loss of generality, we may assume $|\Sigma|\le n$, so  $\hat{n} := |\hat{s}|= \Theta(|\Sigma|\log |\Sigma| + n \log |\Sigma|) = \Theta(n \log |\Sigma|).$
Recall that the na\"{\i}ve parsing algorithm traverses at least $\Omega(n^{5/4})$ trie edges while parsing $s$.
Since the parsing of the suffix $\phi(s)$ of $\hat{s}$ is the $\phi$-image of the parsing of $s$,
this algorithm traverses at least $\Omega(n^{5/4}\log |\Sigma|)$ trie edges while parsing $\hat{s}$.
In terms of $\hat{n}$, the running time is at least $\Omega(\hat{n}^{5/4} / \log^{1/4} |\Sigma|)$, which is $\Omega(\hat{n}^{5/4} / \log^{1/4} \hat{n})$ due to $|\Sigma| \le n < \hat{n}$.
\cqed
\end{proof}

\section{Faster Small-Space Computation}
\label{sec-faster}

In this section we describe a new parsing algorithm that works in $O(n + z\log^2 n)$ time (randomized, in expectation)
and uses $O(z\log n)$ working space to parse the input string over the integer alphabet $\{0,1,\ldots,n^{O(1)}\}$. Before getting to the algorithm itself, we review four tools that are essential for
it: Karp--Rabin hashing~\cite{KR87}, AVL-grammars of Rytter~\cite{Rytter03}, the dynamic z-fast trie of Belazzougui et al.~\cite{BBV10},
and the dynamic marked ancestor data structure of Westbrook~\cite{Westbrook92}.

\paragraph{Karp--Rabin hashing.}
A Karp--Rabin~\cite{KR87} hash function $\phi$ has the form
$\phi (s [1..n]) = \sum_{i = 1}^n s [i] \delta^{i - 1} \bmod p$, where
$p$ is a fixed prime and $\delta$ is a randomly chosen integer from the range $[0..p{-}1]$ (this is a more popular version of the original hash proposed in~\cite{KR87}).  The value $\phi(s)$ is called $s$'s Karp--Rabin hash. It is well-known that, for any $c > 3$, if $p > n^c$, then the probability that two distinct substrings of the given input string of length $n$ have the same hash is less than~$\frac{1}{n^{c-3}}$.

We extensively use the property that the hash of the concatenation $s_1s_2$ of two strings $s_1$ and $s_2$ can be computed as $(\phi(s_1) + \delta^{|s_1|}\phi(s_2)) \bmod p$. Therefore, if the values $\phi(s_1)$ and $\phi(s_2)$ are known and $p \le n^{O(1)}$, then $\phi(s_1s_2)$ can be calculated in $O(1)$ time provided the number $(\delta^{|s_1|} \bmod p)$ is known.

\paragraph{AVL-grammars.}

Consider a context-free grammar $G$ that generates a string $s$ (and only $s$). Denote by $Tree(G)$ the derivation tree of $s$.
We say that $G$ is an \emph{AVL-grammar} (see~\cite{Rytter03}) if $G$ is in the Chomsky normal form and, for every internal node $v$ of $Tree(G)$, the heights of the trees rooted at the left and right children of $v$ differ by at most $1$.
The following result straightforwardly follows from the algorithm of Rytter described in~\cite{Rytter03}.
\begin{lemma}[{see \cite[Th. 2]{Rytter03}}]
Let $G$ be an AVL-grammar generating a prefix $s[1..i{-}1]$ of a string $s$. Suppose that the string $s[i..k]$ occurs in $s[1..i{-}1]$; then one can construct an AVL-grammar generating the string $s[1..k]$ in $O(\log i)$ time modifying at most $O(\log i)$ rules in $G$.\label{lem:addPhrase}
\end{lemma}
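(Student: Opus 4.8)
The plan is to follow Rytter's construction~\cite{Rytter03} almost verbatim, built on two primitives. The first is \emph{balanced concatenation}: by~\cite[Th.~2]{Rytter03}, two AVL-grammars of heights $h_1$ and $h_2$ generating strings $u$ and $v$ can be merged into an AVL-grammar generating $uv$ by adding $O(|h_1-h_2|+1)$ rules in $O(|h_1-h_2|+1)$ time. This is nothing but the AVL-tree \emph{join} operation lifted to grammars: one descends the appropriate spine of the taller derivation tree down to the height of the shorter one, splices in a fresh node, and rebalances upward, each rotation being realized by introducing $O(1)$ new nonterminals. Iterating this, a sequence of AVL-grammars whose heights form a sequence that is first (weakly) increasing and then (weakly) decreasing -- a \emph{bitonic} sequence $h_1,\ldots,h_t$ -- can be concatenated into a single AVL-grammar with $O(t+\sum_j |h_{j+1}-h_j|) = O(t + \max_j h_j - \min_j h_j)$ new rules and in the same asymptotic time, because the consecutive differences telescope. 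The second primitive is \emph{substring extraction}, described next.

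First I would record the height bounds. Since $G$ is in Chomsky normal form and $Tree(G)$ is AVL-balanced with $i-1$ leaves, every nonterminal of $G$, in particular the start symbol $S$, has height $O(\log i)$; and since $s[i..k]$ occurs inside $s[1..i{-}1]$ we have $k < 2i$, so every derivation tree with at most $k$ leaves arising in the construction also has height $O(\log i)$.

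Next, extraction. Write the occurrence of $s[i..k]$ inside $s[1..i{-}1]$ as $s[l..r]$ with $r \le i-1$, so this occurrence is disjoint from the suffix we are about to append. In $Tree(G)$, let $v$ be the lowest common ancestor of the $l$-th and $r$-th leaves, and walk the two paths from $v$ down to these leaves. Whenever the path towards leaf $l$ turns left, its right child roots a subtree lying entirely within $[l..r]$; symmetrically, whenever the path towards leaf $r$ turns right, its left child does. Reading these hanging subtrees (together with the two boundary leaves) from left to right yields nonterminals $X_1,\ldots,X_t$ of $G$, with $t = O(\log i)$, whose expansions concatenate to $s[l..r] = s[i..k]$; moreover, since heights strictly decrease along each of the two downward paths, the height sequence of $X_1,\ldots,X_t$ is bitonic with maximum $O(\log i)$. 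Feeding $X_1,\ldots,X_t$ to the iterated balanced concatenation therefore produces -- without altering any rule of $G$, by adding only $O(\log i)$ new rules, and in $O(\log i)$ time -- an AVL-grammar $H$ for $s[i..k]$ whose root also has height $O(\log i)$. One final balanced concatenation applied to $S$ (generating $s[1..i{-}1]$) and the root of $H$ (generating $s[i..k]$), both of height $O(\log i)$, yields with $O(\log i)$ further rules and time an AVL-grammar whose start symbol generates $s[1..i{-}1]\cdot s[i..k] = s[1..k]$. No old rule is modified, only $O(\log i)$ rules are added, and the total running time is $O(\log i)$.

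The main obstacle I anticipate is the bookkeeping for the iterated concatenation: one must fold the pieces $X_1,\ldots,X_t$ in a careful order -- for instance, merge the nondecreasing prefix left to right, merge the nonincreasing suffix right to left, and finally join the two halves -- and verify that every intermediate accumulator remains AVL-balanced and keeps its height within the interval spanned by the $h_j$, so that each per-merge cost $O(|h_{j+1}-h_j|+1)$ genuinely telescopes to $O(\log i)$ instead of accumulating. This, together with the routine check that a left-to-right reading of the hanging subtrees spells exactly $s[l..r]$ and has bitonic heights, is precisely what Rytter~\cite{Rytter03} carries out, so I would invoke it rather than reprove it.
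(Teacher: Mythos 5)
Your proposal is correct and takes essentially the same route as the paper, which offers no independent proof but simply cites the lemma as an immediate consequence of Rytter's algorithm: your sketch (decompose the occurrence of $s[i..k]$ into $O(\log i)$ existing nonterminals with bitonic heights, then perform telescoping balanced concatenations and a final join with the start symbol) is exactly Rytter's construction. No gaps worth flagging.
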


Let $G$ be an AVL-grammar generating a string $s$. It is well-known that, for any substring $s[i..j]$, one can find in $O(\log n)$ time $O(\log n)$ non-terminals $A_1, \ldots, A_k$ such that $s[i..j]$ is equal to the string generated by $A_1\cdots A_k$. Hence, if each non-terminal $A$ of $G$ is augmented with the Karp--Rabin hash $\phi(t)$ of the string $t$ generated by $A$ and with the number $\delta^{|t|} \bmod p$, then we can compute $\phi(s[i..j])$ in $O(\log n)$ time. One can show that, during the reconstruction of the AVL-grammar in Lemma~\ref{lem:addPhrase}, it is easy to maintain the described integers augmenting the non-terminals (see~\cite{Rytter03}).

\paragraph{Z-fast tries.}

Let $x$ be a string such that one can compute the Karp--Rabin hash of any prefix of $x$ in $O(t_x)$ time.
The z-fast trie~\cite{BBV10} is a compacted trie containing a dynamic set of variable-length strings that supports the following operations:
\begin{itemize}
\item we can find (w.h.p.) in $O(t_x\log |x|)$ time the highest explicit node $v$ such that the longest prefix of $x$ present in the trie is written on the root-$v$ path;
\item we can insert $x$ into the trie in $O(|x| + t_x\log |x|)$ randomized time.
\end{itemize}
The space occupied by the z-fast trie is $\Theta(k)$, where $k$ is the number of strings inserted in the trie.

\paragraph{Dynamic marked ancestor.}

Let $T$ be a dynamic compacted trie (or just tree) with $k$ nodes. The dynamic marked ancestor data structure of~\cite{Westbrook92} supports the following two operations on $T$ (both in $O(\log k)$ time): for a given node $v$, (1)~mark $v$, (2)~find the nearest marked ancestor of $v$ (if any).

\paragraph{Algorithm.}

Our faster parsing algorithm computes the LZD phrases from left to right one by one, spending $O(\log^{O(1)} n)$ time on each phrase. We maintain an AVL-grammar $G$ for the prefix $s[1..i{-}1]$ of $s$ we have already parsed, and a z-fast trie $T$ containing the first phrases $p_1, p_2, \ldots, p_r$ of the LZD parsing of $s$ such that $s[1..i{-}1] = p_1p_2\cdots p_r$. We augment $T$ with the dynamic marked ancestor data structure and mark all nodes corresponding to phrases (i.e., all nodes $v$ such that the string written on the path from the root to $v$ is equal to $t \in \{p_1,\ldots,p_r\}$). We augment each non-terminal of $G$ with the Karp--Rabin hash $\phi(t)$ of this non-terminal's expansion $t$ and with the number $\delta^{|t|} \bmod p$, so that the hash of any substring of $s[1..i{-}1]$ can by calculated in $O(\log n)$ time.


Suppose we are looking for the first part of the next phrase and that, in addition to having parsed $s[1..i{-}1]$, we have already read $s[i..j{-}1]$ without parsing it~--- but we have found the endpoints of an occurrence of $s[i..j{-}1]$ in $s[1..i{-}1]$.  (Notice $s[i..j{-}1]$ can be empty, i.e., $i = j$.) Denote by $x$ the longest prefix of $s[i..j{-}1]$ that is also a prefix of some of the phrases $p_1,\ldots,p_r$. Since we can  compute quickly with $G$ the hash of any prefix of $s[i..j{-}1]$, we can use the z-fast search to find in $O(\log^2 n)$ time a node $v$ of $T$ such that $x$ is written on the path connecting the root and $v$. Let $s[\ell_v .. r_v]$ be a substring of $s[1..i{-}1]$ corresponding to $v$ (the numbers $\ell_v$ and $r_v$ are stored with the node $v$). Using hashes and the binary search, we find the longest common prefix of the strings $s[i..j{-}1]$ and $s[\ell_v .. r_v]$ (with high probability) in $O(\log^2 n)$ time; this prefix must be $x$.

If $s[i..j{-}1] \ne x$, then we perform a marked-ancestor query on the vertex corresponding to $x$ (which can be found in $O(\log^2 n)$ time in the same way as $v$) and thus find the longest phrase that is a prefix of $s[i..j{-}1]$.  We take that phrase as the first part of the next phrase and start over, looking for the second part, with the remainder of \(s [i..j{-}1]\) now being what we have read but not parsed (of which we know an occurrence in $s [1..i{-}1]$).  On the other hand, if $s[i..j{-}1] = x$, then we read $s[j..n]$ in blocks of length $\log^2 n$, stopping when we encounter an index $k$ such that $s[i..k]$ is not a prefix of a phrase $p_1,\ldots,p_r$; the details follow.

Suppose that we have read $q$ blocks and the concatenation $s[i..j + q \log^2 n - 1]$ of $s[i..j{-}1]$ and the $q$ previous blocks is a prefix of a phrase $t \in \{p_1, \ldots, p_r\}$. We compute in $O(\log^2 n)$ time the hashes of all the prefixes of the block $s[j + q \log^2 n .. j + (q + 1) \log^2 n - 1]$, which allows us to compute the hash of any prefix of $s[i .. j + (q + 1)\log^2 n - 1]$ in $O(\log n)$ time.
Therefore, again using z-fast search and binary search, we can check in $O(\log^2 n)$ time if the block $s[j + q \log^2 n .. j + (q + 1) \log^2 n - 1]$ contains such a $k$ --- and, if so, find it.  If $k$ is not found, then using information from the search, we can find a phrase \(t' \in \{p_1, \ldots, p_r\}\) --- which may or may not be equal to $t$ --- such that $s[i .. j + (q + 1)\log^2 n - 1]$ is a prefix of $t'$; we then proceed to the $(q{+}2)$nd block.

Once we have found such a $k$, we conceptually undo reading the characters from $s[k]$ onwards (which causes us to re-read later those $O(\log^2 n)$ characters), then perform a search and marked-ancestor query in $T$, which returns the longest phrase that is a prefix of $s[i..k{-}1]$.  We take that longest phrase as the first part of the next phrase and start over, looking for the second part, with the remainder of \(s [i..k{-}1]\) now being what we have read but not parsed (of which we know an occurrence in $s[1..i{-}1]$).

Once we have found both the first and second parts of the next phrase~--- say, $p'_1$ and $p'_2$~--- we add the next phrase $p_{r+1} = p'_1p'_2$ to $G$ (by Lemma~\ref{lem:addPhrase}) and to $T$, which takes $O(|p_{r+1}| + \log^2 n)$ time.  In total, since processing each block takes $O(\log^2 n)$ time and the algorithm processes at most $z + \frac{n}{\log^2 n}$ blocks, we parse $s$ in $O(n + z \log^2 n)$ time.  Our space usage is dominated by $G$, which takes $O(z \log n)$ space. Finally, we verify in a straightforward manner in $O(n)$ time whether the constructed parsing indeed encodes the input string. If not (which can happen with probability $\frac{1}{n^{c-3}}$, where $p > n^c$), we choose a different random $\delta \in [0..p{-}1]$ for the Karp--Rabin hash and execute the whole algorithm again.

The computation of the LZMW parsing in $O(n + z\log^2 n)$ expected time and $O(z\log n)$ space is similar: the z-fast trie stores pairs $p_1p_2, p_2p_3, \ldots, p_{z-1}p_z$ of adjacent phrases in this case and the nodes corresponding to these pairs are marked. We omit the details as they are straightforward.



\section{Concluding Remarks}
\label{sec-conclusion}

We believe that our new parsing algorithms can be implemented efficiently, and we leave this as future work. Perhaps a more interesting question is whether there exists an LZD/LZMW parsing algorithm with better working space and the same (or better) runtime.  We note that the algorithmic techniques we have developed here can also be applied to, e.g., develop more space-efficient parsing algorithms for LZ-End~\cite{KN13}, a variant of LZ77~\cite{ZL77} with which each phrase \(s [i..j]\) is the longest prefix of \(s [i..n]\) such that an occurrence of \(s [i..j{-}1]\) in \(s [1..i{-}1]\) ends at a phrase boundary.  Kempa and Kosolobov~\cite{KK17} very recently gave an LZ-End parsing algorithm that runs in $O (n \log \ell)$ expected time and $O (z + \ell)$ space, where $\ell$ is the length of the longest phrase and $z$ is the number of phrases.

To reduce Kempa and Kosolobov's space bound, we keep an AVL-grammar (again augmented with the non-terminals' Karp--Rabin hashes, meaning our algorithm Monte-Carlo) of the prefix of $s$ we have processed so far; a list of the endpoints of the phrases so far, in the right-to-left lexicographic order of the prefixes ending at the phrases' endpoints; and an undo stack of the phrases so far.  For each character \(s [k]\) in turn, for \(1 \leq k \leq n\), in $O(\log^{O (1)} n)$ time we use the grammar and the list to find the longest suffix \(s [j..k]\) of \(s [1..k]\) such that an occurrence of \(s [j..k{-}1]\) in \(s [1..j{-}1]\) ends at a phrase boundary.  We use the undo stack to remove from the grammar, the list, and the stack itself, all the complete phrases lying in the substring $s [j..k{-}1]$, and then add the phrase consisting of the concatenation of those removed phrases and \(s [k]\).  By~\cite[Lemma~3]{KK17}, we remove at most two phrases while processing \(s [k]\), so we still use a total of $O(\log^{O (1)} n)$ worst-case time for each character of $s$.  Again, the space bound is dominated by the grammar, which takes \(O (z \log n)\) words.  We leave the details for the full version of this paper.


Regarding compression performance, we have shown that like their ancestor, LZ78, both LZD and LZMW sometimes approximate the smallest grammar poorly. This, of course, does not necessarily detract from their usefulness in real compression tools; now however, practitioners have a much clearer picture of these algorithms' possible behavior. The future work includes closing the gap between the lower bound $\Omega(n^{\frac{1}{3}})$ and the upper bound $O((n/\log n)^{\frac{2}{3}})$ for the approximation ratio and designing parsing algorithms with better guarantees.

\section*{Acknowledgements}
We thank H. Bannai, P. Cording, K. Dabrowski, D. Hücke, D. Kempa, L. Salmela for interesting discussions on LZD at the 2016 StringMasters and Dagstuhl meetings. Thanks also go to D. Belazzougui for advice about the z-fast trie and to the anonymous referees.

\bibliographystyle{plainurl}
\bibliography{refs}
\end{document}